\def\cb{{\mathcal B}}
\def\bb{{\mathbb B}}
\def\a{\alpha}
\def\b{\beta}
  \def\G{\Gamma}
\def\e{\eta}
\def\l{\lambda} 
\def\m{\mu}
\def\s{\sigma} \def\S{\Sigma}
\def\v{\varphi} \def\F{\Phi}
\def\w{\omega} \def\Om{\Omega}
\def\ab{\overline{a}} \def\bb{\overline{b}} \def\cb{\overline{c}}
\def\L{\ell}
\def\h{{\mathbf{h}}}
\def\x{\theta_1} \def\y{\theta_2} \def\z{\theta_3}
\def \ub{{\mathbf{u}}}
\def\tb{{\mathbf{t}}}
\newtheorem{thm}{Theoram}[section]
\newtheorem{lem}[thm]{Lemma}
\newtheorem{rem}{Remark}[section]
\newtheorem{ex}{Example}[section]
\begin{document}
\title{On Ground States and Phase Transitions of $\lambda$-Model on the Cayley Tree}

\author{Farrukh Mukhamedov$^1$, Chin Hee Pah$^2$, Hakim Jamil$^3$.}

\address{$^1$Department of Mathematical Sciences, College of Science, The United Arab Emirates
University, P.O. Box, 15551, Al Ain, Abu Dhabi, UAE}

\address{$^{2,3}$Department of Computational \& Theoretical Sciences, Faculty of Science, International Islamic University Malaysia P.O. Box, 141, 25710, Kuantan}

\ead{email: farrukh.m@uaeu.ac.ae}

\begin{abstract}
In the paper, we consider the $\l$-model with spin values $\{1, 2,
3\}$ on the Cayley tree of order two. We first describe ground
states of the model. Moreover, we also proved the existence of
translation-invariant Gibb measures for the $\l$-model which yield
the existence of the phase transition. Lastly, we established the
exitance of 2-periodic Gibbs measures for the model.
\end{abstract}

\section{Introduction}

The choice of Hamiltonian for concrete systems of interacting
particles represents an important problem of equilibrium statistical
mechanics \cite{Ba}. The matter is that, in considering concrete
real systems with many (in abstraction, infinitely many) degrees of
freedom, it is impossible to account for all properties of such a
system without exceptions. The main problem consists in accounting
only for the most important features of the system, consciously
removing the other particularities. On the other hand, the main
purpose of equilibrium statistical mechanics consists in describing
all limit Gibbs distributions corresponding to a given Hamiltonian
\cite{Ge}. This problem is completely solved only in some
comparatively simple cases. In particular, if there are only binary
interactions in the system then  problem of describing the limit
Gibbs distributions simplifies. One of the important models in
statistical mechanics is Potts models. These models describe a
special and easily defined class of statistical mechanics models.
Nevertheless, they are richly structured enough to illustrate almost
every conceivable nuance of the subject. In particular, they are at
the center of the most recent explosion of interest generated by the
confluence of conformal field theory, percolation theory, knot
theory, quantum groups and integrable systems \cite{Ma,NS}. The
Potts model \cite{Po} was introduced as a  generalization of the
Ising model to more than two components. At present the Potts model
encompasses a number of problems in statistical physics (see, e.g.
\cite{W}). Investigations of phase transitions of spin models on
hierarchical lattices showed that they make the exact calculation of
various physical quantities \cite{DGM},\cite{P1,P2},\cite{T}. Such
studies on the hierarchical lattices begun with development of the
Migdal-Kadanoff renormalization group method where the lattices
emerged as approximants of the ordinary crystal ones. In
\cite{PLM1,PLM2} the phase diagrams of the $q$-state Potts models on
the Bethe lattices were studied and the pure phases of the the
ferromagnetic Potts model were found. In \cite{G,GMM06} using those
results, uncountable number of the pure phase of the 3-state Potts
model were constructed. These investigations were based on a
measure-theoretic approach developed in \cite{Pr},\cite{PLM1,PLM2}.
The structure of the Gibbs measures of the Potts models has been
investigated in \cite{G,GR}.

It is natural to consider a model which is more complicated than
Potts one, in \cite{M} we proposed to study  so-called $\l$-model on
the Cayley tree (see also \cite{R,Roz}). In the mentioned paper, for
special kind of $\l$-model, its disordered phase is studied (see
\cite{GR,MR3}) and some its algebraic properties are investigated.
In the present, we consider symmetric $\l$-model with spin values
$\{1, 2, 3\}$ on the Cayley tree of order two. This model is much
more general than Potts model, and exhibits interesting structure of
ground states. We first describe ground states of the model.
Moreover, we also proved the existence of translation-invariant Gibb
measures for the $\l$-model which yield the existence of the phase
transition. Lastly, we established the exitance of 2-periodic Gibbs
measures for the model.

\section{Preliminaries}

Let $\Gamma^k_+ = (V,L)$ be a semi-infinite Cayley tree of order
$k\geq 1$ with the root $x^0$ (whose each vertex has exactly $k+1$
edges, except for the root $x^0$, which has $k$ edges). Here $V$ is
the set of vertices and $L$ is the set of edges. The vertices $x$
and $y$ are called {\it nearest neighbors} and they are denoted by
$l=\langle{x,y}\rangle$ if there exists an edge connecting them. A collection of
the pairs $\langle{x,x_1}\rangle,\dots,\langle{x_{d-1},y}\rangle$ is called a {\it path} from
the point $x$ to the point $y$. The distance $d(x,y), x,y\in V$, on
the Cayley tree, is the length of the shortest path from $x$ to $y$.
$$
W_{n}=\left\{ x\in V\mid d(x,x^{0})=n\right\}, \ \
V_n=\bigcup\limits_{m=1}^{n}W_m , \ \
L_{n}=\left\{
l=<x,y>\in L\mid x,y\in V_{n}\right\}.
$$
The set of direct successors of $x$ is defined by
$$
S(x)=\left\{ y\in W_{n+1}:d(x,y)=1\right\}, x\in W_{n}.
$$
Observe that any vertex $x\neq x^{0}$ has $k$ direct successors and
$x^{0}$ has $k+1$.

Now we are going to introduce a coordinate structure in $\G_+^k$.
Every vertex $x$ (except for $x^{0}$) of $\G_+^k$ has coordinates
$(i_1,\dots,i_n)$, here $i_m\in\{1,\dots,k\},\ 1\leq m\leq n$ and
for the vertex $x^0$ we put $(0)$ (see Figure 1). Namely, the
symbol $(0)$ constitutes level $0$ and the sites $i_1,\dots,i_n$
form level $n$ of the lattice. In this notation for $x\in\G_+^k,\
x=\{i_1,\dots,i_n\}$ we have
$
S(x)=\{(x,i): 1\leq i\leq k\},
$
here $(x,i)$ means that $(i_1,\dots,i_n,i)$.

\begin{figure}
    \begin{center}
        \includegraphics[width=10.07cm]{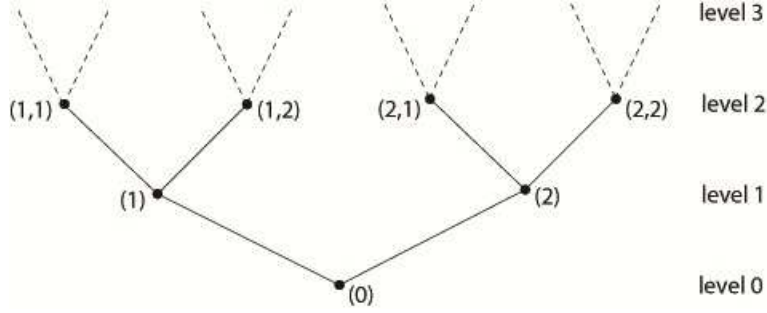}
    \end{center}
    \caption{The first levels of $\G_+^2$} \label{fig1}
\end{figure}

Let us define on $\G_+^k$ a binary operation $\circ:\G^k_+\times\G_+^k\to\G_+^k$
as follows, for any two elements $x=(i_1,\dots,i_n)$ and $y=(j_1,\dots,j_m)$ put
$$
x\circ y=(i_1,\dots,i_n)\circ(j_1,\dots,j_m)=(i_1,\dots,i_n,j_1,\dots,j_m)
$$
and
$$
y\circ x=(j_1,\dots,j_m)\circ(i_1,\dots,i_n)=(j_1,\dots,j_m,i_1,\dots,i_n).
$$
By means of the defined operation $\G_+^k$ becomes a
noncommutative semigroup with a unit. Using this semigroup
structure one defines translations $\tau_g:\G_+^k\to\G_+^k,\
g\in\G_k$ by
$$
\tau_g(x)=g\circ x.
$$

Let $G\subset\G_+^k$ be a sub-semigroup of $\G_+^k$ and $h:V\to \mathbb{R}$ be a function. We say that $h$ is a $G$-{\it periodic} if $h(\tau_g(x))=h(x)$ for all $x\in V$,$g\in G$ and $l\in L$.
Any $\G^k_+$-periodic function is called {\it translation-invariant}. Put
$$
G_m=\left\{x\in\G_+^k: d(x,x^0)\equiv0(\mathrm{mod }\ m)\right\},\ \ \ m\geq2.
$$

One can check that $G_m$ is a sub-semigroup with a unit.

Let us consider some examples. Let m=2, k=2, then $G_2$ can be
written as follows:
\[G_2=\left\{(0),(i_1,i_2,\dots,i_{2n}),n \in \mathbb{N}\right\}\]
In this case, $G_2$-periodic function $h$ has the following form:

\begin{equation}
    \h(x)=\left\{
    \begin{array}{ll}
        h_1&,x=(i_1,i_2,\dots,i_{2n}),\\
        h_2&,x=(i_1,i_2,\dots,i_{2n+1})
    \end{array}\right.
\end{equation}
for $i_k \in \{1,2\}$ and $k \in V$.
\begin{figure}[h!]
    \begin{center}
        \includegraphics[width=13cm]{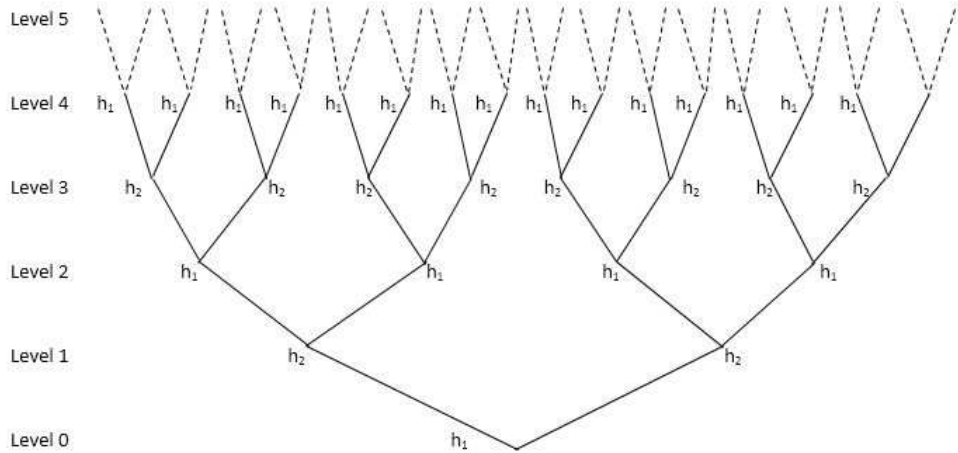}
    \end{center}
    \caption{Cayley tree for $G_2$} \label{example}
\end{figure}

In this paper, we consider the models where the spin takes values in
the set $\F=\{1,2,\dots,q\}$ and is assigned to the vertices of the
tree. A configuration $\s$ on V is then defined as a function $x \in
V \rightarrow \s(x)\in \F$; the set of all configurations coincides
with $\Om=\F^{\G^{k}}$. The Hamiltonian the $\l$-model has the
following form
\begin{eqnarray}\label{ham}
    H(\sigma)=\sum\limits_{<x,y>\in L}\l(\sigma (x),\sigma (y))
\end{eqnarray}
where the sum is taken over all pairs of nearest-neighbor vertices
$\langle{x,y}\rangle$, $\s \in \Om$. From a physical point of view
the interactions between particles do not depend on their locations,
therefore from now on we will assume that $\l$ is a symmetric
function, i.e. $\l(u,v)=\l(v,u)$ for all $u,v\in \mathbb{R}$.

We note that $\l$-model of this type can be considered as
generalization of the Potts model. The Potts model corresponds to
the choice  $\l(x,y)=-J\delta_{xy}$, where $x,y,J \in \mathbb{R}$.

In what follows, we restrict ourself to the case $k=2$ and $\F=\{1,2,3\}$, and for the sake of simplicity, we consider the following function:
\begin{equation}\label{cond}
    \l(i,j)=\left\{
    \begin{array}{lll}
        \ab,& \ \textrm{if} \ \ &|i-j|=2,\\
        \bb,&\ \textrm{if} \ \ &|i-j|=1,\\
        \cb,&\ \textrm{if} \ \ &i=j,
    \end{array}\right.
\end{equation}
where $\ab,\bb,\cb\in \mathbb{R}$ for some given numbers.

\begin{rem}
    We point out the considered model is more general then well-known Potts model \cite{W}, since if $\ab=\bb=0,\cb \neq 0$,
    then this model reduces to the mentioned model.
\end{rem}

\section{Ground States}

In this section, we describe ground state of the $\l$-model on a
Cayley tree. For a pair of configurations $\s$ and $\v$ coinciding
almost everywhere, i.e., everywhere except finitely many points, we
consider the relative Hamiltonian $H(\s,\v)$ determining the energy
differences of the configurations $\s$ and $\v$:

\begin{eqnarray}\label{eq12}
H(\s,\v)=\sum_{\substack{<x,y>\\ x,y \in V}}(\l(\s(x),\s(y))-\l(\v(x),\v(y)))
\end{eqnarray}

For each $x \in V$, the set $\{x,S(x)\}$ is called {\it a ball}, and
it is denoted by $b_x$. The set of all balls we denote by $M$.
%\begin{figure}[h!]
%   \begin{center}
%       \includegraphics[width=4cm]{}
%   \end{center}
%   \caption{A ball} \label{ball}
%\end{figure}

We define the energy of the configuration $\s_b$ on b as follows
\[U(\s_b)=\dfrac{1}{2}\sum_{\substack{<x,y>\\ x,y \in V}}(\l(\s(x),\s(y)))\]
From \eqref{eq12}, we got the following lemma.

\begin{lem}
    The relative Hamiltonian \eqref{eq12} has the form

    \[H(\s,\v)=\sum_{b \in M}(U(\s_b)-U(\v_b)).\]
\end{lem}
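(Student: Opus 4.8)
The plan is to rewrite the relative Hamiltonian \eqref{eq12} by grouping its edge-contributions according to the balls $b_x = \{x, S(x)\}$ in $M$, and then to recognize each grouped sum as exactly $U(\s_b) - U(\v_b)$. The key observation to exploit is that on the Cayley tree every edge $\langle x,y\rangle$ lies in precisely two balls: if $y \in S(x)$, then the edge belongs to $b_x$ (as an edge joining $x$ to a successor) and also to $b_y$ (as the edge joining $y$ to its parent $x$). This double-counting is precisely what the factor $\tfrac12$ in the definition of $U(\s_b)$ compensates for.

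First I would make the edge-to-ball incidence precise. For a fixed ball $b = b_x$, the edges appearing in the local sum defining $U(\s_b)$ are the $k$ edges $\langle x, z\rangle$ with $z \in S(x)$, so $U(\s_b) = \tfrac12 \sum_{z \in S(x)} \l(\s(x),\s(z))$. Summing $U(\s_b)$ over all $b \in M$ and interchanging the order of summation, each edge $\langle x, y\rangle$ of the tree contributes its term $\l(\s(x),\s(y))$ exactly twice — once from the ball centered at the upper endpoint and once from the ball centered at the lower endpoint — and the two factors of $\tfrac12$ combine to give weight $1$. Hence
\[
\sum_{b \in M} U(\s_b) = \sum_{\langle x,y\rangle} \l(\s(x),\s(y)).
\]
The identical computation applies to $\v$, and subtracting the two identities yields
\[
\sum_{b \in M}\bigl(U(\s_b) - U(\v_b)\bigr)
= \sum_{\langle x,y\rangle}\bigl(\l(\s(x),\s(y)) - \l(\v(x),\v(y))\bigr)
= H(\s,\v),
\]
which is the claimed formula.

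The one genuine subtlety — and where I expect the main obstacle to lie — is the finiteness and well-definedness of these sums. Since $\s$ and $\v$ coincide outside a finite set, all but finitely many edge-terms $\l(\s(x),\s(y)) - \l(\v(x),\v(y))$ vanish, so the sum in \eqref{eq12} is a finite sum; correspondingly only finitely many balls $b$ give a nonzero contribution $U(\s_b) - U(\v_b)$, so the interchange of summation order is legitimate. I would also address the root and boundary bookkeeping: the root $x^0$ is the only vertex with no parent, and every non-root vertex is the successor of exactly one vertex, so the pairing of each edge with its two incident balls is exact and no edge is left uncounted or counted with the wrong multiplicity. Once these finiteness and incidence points are checked, the identity follows immediately from interchanging the summation order, so the lemma is essentially a reorganization of the defining sum rather than a substantive computation.
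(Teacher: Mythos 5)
The paper offers no written proof of this lemma (it is presented as an immediate consequence of \eqref{eq12}), so your argument must stand on its own, and there it has a genuine gap: the edge--ball incidence count on which the whole computation rests is wrong for the balls as this paper defines them. Here $b_x=\{x,S(x)\}$ consists of $x$ and its direct successors only; the parent of $x$ is not an element of $b_x$. Consequently the edge $\langle x,y\rangle$ with $y\in S(x)$ has both endpoints in $b_x$ but not both endpoints in $b_y$ (since $x\notin b_y$), so each edge lies in exactly \emph{one} ball, not two. Your assertion that the edge is also counted in $b_y$ ``as the edge joining $y$ to its parent'' tacitly replaces the paper's definition by the closed-unit-ball convention $b_x=\{y\in V: d(x,y)\le 1\}$. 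With the definition actually in force, your interchange of summation gives $\sum_{b\in M}U(\s_b)=\tfrac12\sum_{\langle x,y\rangle}\l(\s(x),\s(y))$, and hence $\sum_{b\in M}\bigl(U(\s_b)-U(\v_b)\bigr)=\tfrac12 H(\s,\v)$: the claimed identity holds only up to a factor of $2$.

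Note also that the discrepancy cannot be repaired simply by adopting the closed-ball convention under which your double-counting argument would be valid: for $k=2$ a closed unit ball contains three edges, so $U(\v_b)$ would take values of the form $(\a+\b+\g)/2$ rather than the six values $(\a+\b)/2$ listed in \eqref{U_n} and required by \eqref{eq13} and the ground-state analysis of Section 3. So either the factor $\tfrac12$ in the definition of $U(\s_b)$ or the statement of the lemma must absorb the missing factor of $2$, and a correct proof has to say which, rather than assert an incidence relation that the stated definitions do not support. (The factor of $2$ is harmless for identifying ground states, since minimizing $U(\v_b)$ ball by ball is unaffected, but the lemma is an identity and the bookkeeping must be exact.) Your discussion of finiteness and well-definedness of the sums is correct and is not where the difficulty lies.
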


\begin{lem}
    The inclusion
    \begin{eqnarray}\label{eq13}
    U(\v_b)\in \left\{\dfrac{\a + \b}{2}: \ \forall \a,\b \in \{\ab,\bb,\cb\}\right\}
    \end{eqnarray}
    holds for every configuration $\v_b$ on $b$ ($b \in M)$.
\end{lem}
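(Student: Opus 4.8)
The plan is to read off $U(\v_b)$ directly from the geometry of a ball on $\G_+^2$ together with the three-valued definition \eqref{cond} of $\l$. First I would record the relevant combinatorial fact: for $k=2$ a ball $b_x=\{x,S(x)\}$ is the centre $x$ together with its two direct successors $S(x)=\{(x,1),(x,2)\}$, and the only nearest-neighbour pairs with both endpoints in $b_x$ are the two edges $\langle x,(x,1)\rangle$ and $\langle x,(x,2)\rangle$; the two successors are siblings and hence are not joined by an edge of the tree. Consequently the sum defining $U(\v_b)$ has exactly two summands.

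With this in hand the computation is immediate. Substituting into the definition of the energy gives
\[
U(\v_b)=\frac12\Big(\l\big(\v(x),\v((x,1))\big)+\l\big(\v(x),\v((x,2))\big)\Big).
\]
By \eqref{cond} the function $\l$ takes only the three values $\ab,\bb,\cb$, so each term $\l(\v(x),\v((x,i)))$, $i=1,2$, lies in $\{\ab,\bb,\cb\}$ no matter what the spins $\v(x),\v((x,i))\in\{1,2,3\}$ are. Putting $\a=\l(\v(x),\v((x,1)))$ and $\b=\l(\v(x),\v((x,2)))$ we obtain $U(\v_b)=\frac{\a+\b}{2}$ with $\a,\b\in\{\ab,\bb,\cb\}$, which is exactly the inclusion \eqref{eq13}.

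I do not expect a genuine obstacle, since the statement is forced once the geometry is fixed; the only point deserving attention is the verification that a ball contributes precisely two edges, i.e. that each ball centre has two successors and that sibling vertices are non-adjacent. (Under the convention in which the root $x^0$ also has two successors this covers every ball; otherwise the root ball would need a separate, analogous count.) As a by-product the argument shows that $U(\v_b)$ can take at most the six values $\ab,\bb,\cb,\frac{\ab+\bb}{2},\frac{\ab+\cb}{2},\frac{\bb+\cb}{2}$, a refinement that should be useful for the ensuing description of the ground states.
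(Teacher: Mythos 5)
Your proof is correct, and it is exactly the argument the paper leaves implicit: the lemma is stated without proof, but your observation that a ball $b_x=\{x,S(x)\}$ in $\G_+^2$ contains precisely the two edges from $x$ to its successors (siblings being non-adjacent), so that $U(\v_b)$ is the half-sum of two values of $\l$, each lying in $\{\ab,\bb,\cb\}$ by \eqref{cond}, is the only route available and is consistent with the paper's subsequent list of the six possible values in \eqref{U_n}. Your closing remark about the root is the right caution, and under the paper's convention the root of $\G_+^k$ has $k$ (here two) successors as a ball centre, so no separate case is needed.
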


   A configuration $\v$ is called a \textit{ground state} of the relative Hamiltonian $H$ if
    \begin{eqnarray}\label{eq14}
    U(\v_b)= \min \left\{\dfrac{\a + \b}{2}: \ \forall \a,\b \in \{\ab,\bb,\cb\}\right\}
    \end{eqnarray}
    for any $b \in M$

For any configuration $\s_b$, we have
\[U(\s_b)\in\{U_1,U_2,U_3,U_4,U_5,U_6\},\]
where
\begin{eqnarray}\label{U_n}
U_1=\ab&,U_2=(\ab+\bb)/2&,U_3=(\ab+\cb)/2,\nonumber\\
U_4=\bb&,U_5=(\bb+\cb)/2&,U_6=\cb.
\end{eqnarray}
We denote
\begin{eqnarray}\label{eq15}
A_m=\big\{(\ab,\bb,\cb)\in \mathbb{R}^3|\  U_m=\min_{1 \leq k \leq
6}\{U_k\}\big\}
\end{eqnarray}
Using \eqref{eq15}, we obtain
$$\begin{array}{llllll}
A_1=\left\{(\ab,\bb,\cb)\in \mathbb{R}^3| \ \ab \leq \bb,\ab \leq
\cb\right\},\ \ A_2=\left\{(\ab,\bb,\cb)\in \mathbb{R}^3| \ \ab = \bb\leq \cb\right\},\\
A_3=\left\{(\ab,\bb,\cb)\in \mathbb{R}^3| \ \ab = \cb,\leq
\bb\right\},\ \ \
A_4=\left\{(\ab,\bb,\cb)\in \mathbb{R}^3| \ \bb \leq \ab,\bb \leq \cb\right\},\\
A_5=\left\{(\ab,\bb,\cb)\in \mathbb{R}^3| \ \bb = \cb\leq
\ab\right\},\ \ \  \
A_6=\left\{(\ab,\bb,\cb)\in \mathbb{R}^3| \ \cb \leq \ab,\cb \leq \bb\right\}.\\
\end{array}$$

Now, we want to find ground states for each considered cases. To do
so, we introduce some notation. For each sequence
$\{k_0,k_1,\dots,k_n,\dots\}, k_n\in \{1,2,3\} $, $n \in
\mathbb{N}\cup \{0\}$, we define a configuration $\s$ on $\Om$ by
\begin{eqnarray*}
    \s(x)=k_{\L}, \ \ \textrm{if}\ \  x \in W_{\L}&,\L \geq 0.
\end{eqnarray*}
This configuration is denoted by $\s_{[k_n]}$.\begin{figure}[h!]
    \begin{center}
        \includegraphics[width=13cm]{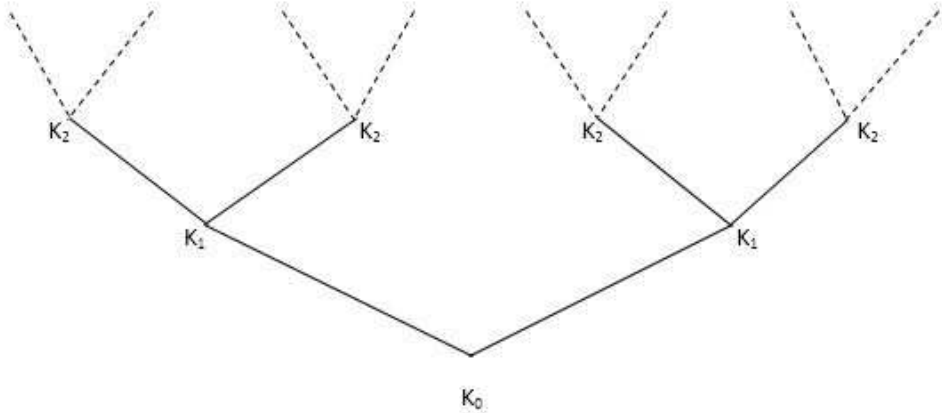}
    \end{center}
    \caption{Configuration for $\s_{[k_n]}$} \label{k0}
\end{figure}

If the sequence $\{k_0,k_1,\dots,k_n,\dots\}$ is $n$-periodic,(i.e. $k_{\L+n}=k_{\L},\forall n \in \mathbb{N}$), then instead of $\{k_0,k_1,\dots,k_n,\dots\}$, we write $\{k_0,k_1,\dots,k_{n-1}\}$. Correspondingly, the associated configuration is denoted by $\s_{[k_0,k_1,\dots,k_{n-1}]}$
\begin{thm}
    Let $(\ab,\bb,\cb) \in A_1$, then there are only two $G_2$-periodic ground states.
\end{thm}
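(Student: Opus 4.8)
The plan is to turn the global ground-state condition \eqref{eq14} into a local constraint on single edges, and then to read off which $G_2$-periodic configurations satisfy it. First I would note that on $A_1$ we have $\min_{1\le k\le6}U_k = U_1 = \ab$, because $\ab\le\bb$ and $\ab\le\cb$ make $\ab$ the smallest of the six averages listed in \eqref{U_n}. Hence, by \eqref{eq14}, a configuration $\v$ is a ground state exactly when $U(\v_b)=\ab$ for every ball $b\in M$. Writing the two successors of the center $x$ of $b$ as $y_1,y_2\in S(x)$, we have $U(\v_b)=\tfrac12\big(\l(\v(x),\v(y_1))+\l(\v(x),\v(y_2))\big)$; since every coupling value lies in $\{\ab,\bb,\cb\}$ and $\ab$ is the minimum, each summand is at least $\ab$, so the average equals $\ab$ if and only if \emph{both} edges carry the value $\ab$. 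By \eqref{cond} this means $|\v(x)-\v(y)|=2$ for every $x$ and every $y\in S(x)$.

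Next I would impose $G_2$-periodicity. Such a $\v$ takes a single value $s_0$ on the even levels $W_{2n}$ and a single value $s_1$ on the odd levels $W_{2n+1}$. The center of any ball and its successors sit on two consecutive levels, so both edges of the ball carry the common value $\l(s_0,s_1)$ and $U(\v_b)=\l(s_0,s_1)$ for every $b$. The ground-state condition therefore reduces to the single equation $\l(s_0,s_1)=\ab$, i.e. $|s_0-s_1|=2$ by \eqref{cond}. Since $\F=\{1,2,3\}$, the only solution is $\{s_0,s_1\}=\{1,3\}$ (in particular the spin value $2$ cannot occur, as $|2-j|\ne2$ for all $j\in\F$).

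The two ordered choices $(s_0,s_1)=(1,3)$ and $(s_0,s_1)=(3,1)$ then give precisely the two $G_2$-periodic configurations $\s_{[1,3]}$ and $\s_{[3,1]}$, and the first paragraph already verifies that both meet $U(\v_b)=\ab$ on every ball, so both are genuine ground states. This yields the asserted count of exactly two.

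The delicate point is the equivalence ``$U(\v_b)=\ab$ iff both edges equal $\ab$''. It is valid on the strict interior $\ab<\bb,\ \ab<\cb$ of $A_1$, which is the case meant here (the equality boundaries are split off as $A_2,A_3$). On a boundary such as $\ab=\bb<\cb$ a ball can already attain the minimum with edges valued $\bb$, so the local constraint relaxes to $\v(x)\ne\v(y)$; then every pair $(s_0,s_1)$ with $s_0\ne s_1$ is admissible and there are six $G_2$-periodic ground states rather than two. I would therefore state and prove the theorem under the strict inequalities and deal with the degenerate cases separately.
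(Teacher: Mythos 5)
Your proof is correct and follows essentially the same route as the paper: identify $\min_{1\le k\le 6} U_k=\ab$ on $A_1$, observe that a ball attains this value only when both of its edges satisfy $|\v(x)-\v(y)|=2$, and conclude that a $G_2$-periodic configuration must alternate the values $1$ and $3$ between even and odd levels, yielding exactly $\s_{[1,3]}$ and $\s_{[3,1]}$. You in fact supply more than the paper does: the local edge argument establishing that these are the \emph{only} such ground states is merely asserted there (``all ground states will coincide with these ones''), and your caveat that the count of two requires the strict inequalities $\ab<\bb$, $\ab<\cb$ is a genuine and correct point, since on the overlap $A_1\cap A_2$ (where $\ab=\bb\le\cb$) every pair $s_0\neq s_1$ becomes admissible and one gets six $G_2$-periodic ground states, consistent with the paper's own theorem for $A_2$.
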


\begin{proof}
    Let $(\ab,\bb,\cb) \in A_1$, then one can see that for this triple, the minimal value is $\ab$, which is achieved by the configuration on b, given in Figure \ref{fig2}.
    \begin{figure}[h!]
        \begin{center}
            \includegraphics[width=7cm]{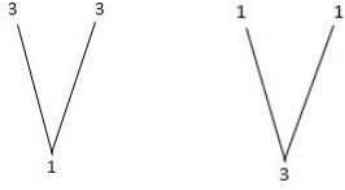}
        \end{center}
        \caption{Configurations for $A_1$} \label{fig2}
    \end{figure}

    Now using Figure \ref{fig2}, for each $n\in \mathbb{N}$, one can construct configurations on $\Om$ defined by:
    \begin{eqnarray*}
        \s^{(2)}_1=\s_{[1,3]},&\s^{(2)}_2=\s_{[3,1]}.
    \end{eqnarray*}
    \begin{figure}[h!]
        \begin{center}
            \includegraphics[width=11cm]{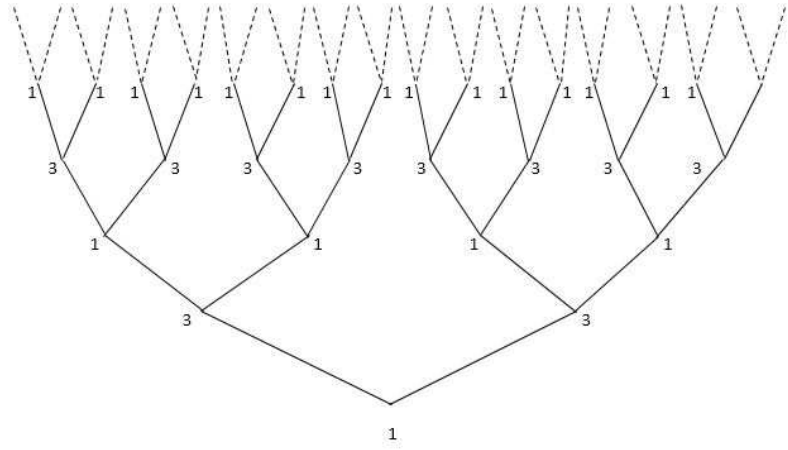}
        \end{center}
        \caption{Configuration for $\s^{(2)}_1$} \label{1313}
    \end{figure}

    Then, we can see that for any $b \in M$, one has
    \[U(\s^{(2)}_{b_{1,2}})=\min_{1 \leq k \leq 6}\{U_k\}\]
    which means $\s^{(2)}_{1,2}$ is a ground state. Moreover, $\s^{(2)}_{1,2}$ is $G_2$-periodic. Note that all ground states will coincide with these ones.
\end{proof}

\begin{thm}
    Let $(\ab,\bb,\cb) \in A_2$, then the following statements hold:
    \begin{itemize}
        \item[(i)] for every $n \in \mathbb{N}$, there is $G_n$-periodic ground state;
        \item[(ii)] there is uncountable number of ground states.
    \end{itemize}
\end{thm}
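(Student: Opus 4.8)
The plan is to reduce both assertions to a purely combinatorial statement about proper colourings of the tree. Fix $(\ab,\bb,\cb)\in A_2$, so $\ab=\bb\leq\cb$, and I first treat the generic case $\ab=\bb<\cb$. Since $U(\s_b)$ is the average of the two bond energies of the ball $b=b_x=\{x,S(x)\}$, the inclusion \eqref{eq13} gives that the minimal ball energy is $\min_{1\le k\le 6}U_k=\ab=\bb$ (see \eqref{U_n}), and an average $(\a+\b)/2$ of two numbers from $\{\ab,\bb,\cb\}$ equals $\ab$ precisely when $\a,\b\in\{\ab,\bb\}$, because $\ab<\cb$. Hence a ball attains the minimum \eqref{eq14} if and only if both bonds $\langle x,(x,1)\rangle$ and $\langle x,(x,2)\rangle$ are non-monochromatic, i.e. $\v(x)\neq\v((x,i))$. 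As each edge of $\G_+^2$ is the downward bond of exactly one ball, $\v$ is a ground state if and only if $\v(x)\neq\v(y)$ for every edge $\langle x,y\rangle$. Thus the ground states in $A_2$ are exactly the proper $3$-colourings of $\G_+^2$. (On the boundary locus $\ab=\bb=\cb$ every configuration has the same energy and is a ground state, so both claims are trivial there; I assume $\ab=\bb<\cb$ below.)

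For (i) I would build, for each $n\ge 2$, a level-dependent ground state that is $G_n$-periodic. Pick a cyclic word $k_0,k_1,\dots,k_{n-1}\in\{1,2,3\}$ with $k_\ell\neq k_{\ell+1}$ for all $\ell$ (indices mod $n$); such a proper $3$-colouring of the $n$-cycle exists for every $n\ge 2$ (take $1,2,1,2,\dots$ when $n$ is even and $1,2,\dots,2,3$ when $n$ is odd). Extend it periodically and form the configuration $\s_{[k_0,\dots,k_{n-1}]}$, which assigns $k_{\ell}$ with $\ell=d(x,x^0)\ (\mathrm{mod}\ n)$. Adjacent vertices have levels differing by one and vertices on a common level are never adjacent, so $k_\ell\neq k_{\ell+1}$ makes this a proper colouring, hence a ground state by the reduction. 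Finally, for $g\in G_n$ we have $d(g\circ x,x^0)=|g|+d(x,x^0)\equiv d(x,x^0)\ (\mathrm{mod}\ n)$ since $|g|\equiv 0\ (\mathrm{mod}\ n)$, so $\s_{[k_n]}(\tau_g(x))=\s_{[k_n]}(x)$ and the configuration is $G_n$-periodic.

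For (ii) I would exhibit a continuum of distinct proper colourings. Fix one ray $x^0=y_0,y_1,y_2,\dots$ with $y_{m+1}=(y_m,1)$. At $y_0$ there are three admissible spins, and at each $y_{m+1}$ exactly two spins differ from $\v(y_m)$; encode this binary choice by $(\ve_m)_{m\ge 1}\in\{0,1\}^{\bn}$. Every such sequence defines a proper colouring of the ray, and, since each vertex has only two descendants while three colours are available, each extends greedily to at least one proper colouring of all of $\G_+^2$. Distinct sequences give distinct configurations, so the set of ground states surjects onto the uncountable set $\{0,1\}^{\bn}$ and is therefore uncountable.

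The decisive point is the reduction in the first paragraph: one must verify that in $A_2$ the minimal bond-pair average equals $\ab=\bb$ and is attained exactly by the non-monochromatic bonds. This is precisely where the equality $\ab=\bb$ (rather than the strict inequalities defining $A_1$ or $A_4$) produces the extra freedom that underlies the whole theorem. Once this identification is in place, (i) is a routine verification of $G_n$-periodicity and (ii) is a one-line cardinality argument; the only mild subtlety is isolating the degenerate stratum $\ab=\bb=\cb$, where the characterisation as proper colourings fails but the conclusions hold for the trivial reason that every configuration is a ground state.
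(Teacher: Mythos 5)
Your proof is correct, and its core — realizing that on $A_2$ the minimal ball energy is $\ab=\bb$ and is attained exactly by the balls whose two bonds are both non-monochromatic, and then producing $G_n$-periodic and uncountably many such configurations — is the same mechanism the paper uses. The differences are ones of presentation and of the construction in (ii). You make explicit the characterisation ``ground state $\Leftrightarrow$ proper $3$-colouring of $\G_+^2$'' (including the correct observation that each edge lies in exactly one ball, and the separate treatment of the degenerate stratum $\ab=\bb=\cb$), whereas the paper only exhibits the minimizing ball configurations in a figure and verifies its specific examples. For (i) your construction (a proper colouring of the $n$-cycle repeated by level) coincides with the paper's words $\s_{[1,(2,3),\dots,(2,3),2]}$ and $\s_{[1,(2,3),\dots,(2,3)]}$, and you add the verification that $d(g\circ x,x^0)=|g|+d(x,x^0)$ makes level-$n$-periodic configurations $G_n$-periodic, which the paper asserts without comment. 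For (ii) the paper stays within level-dependent configurations, indexing them by the uncountable set $\S_{1,2,3}$ of sequences in $\{1,2,3\}^{\bn}$ with no two consecutive entries equal; you instead fix a single ray, make a binary choice of colour at each step, and extend greedily off the ray. Both give an injection of an uncountable set into the ground states; the paper's family is cleaner to write down, while yours shows in passing that the ground states are far from being exhausted by level-dependent ones. Your tacit restriction to $n\ge 2$ in (i) is consistent with the paper, which defines $G_m$ only for $m\ge 2$ (and indeed for $\ab=\bb<\cb$ no constant, i.e.\ $G_1$-periodic, configuration is a ground state).
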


\begin{proof}
Let $(\ab,\bb,\cb) \in A_2$, then one can see that for this triple,
the minimal value is $(\ab+\bb)/2$, which is achieved by the
configurations on $b$ given in Figure \ref{fig3}.
    \begin{figure}[h!]
        \begin{center}
            \includegraphics[width=9cm]{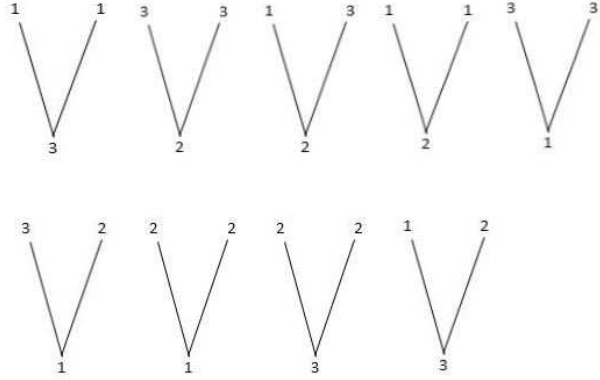}
        \end{center}
        \caption{Configurations for $A_2$} \label{fig3}
    \end{figure}
(i) Now using Figure \ref{fig3}, for each $n \in \mathbb{N}$, one
can construct configurations on $\Om$ defined by
        \begin{eqnarray*}
            \s^{(2n)}=\s_{\underbrace{[1,(2,3),\dots,(2,3),2]}_\text{2n}},\\\s^{(2n+1)}=\s_{\underbrace{[1,(2,3),\dots,(2,3)]}_\text{2n+1}}.
        \end{eqnarray*}
        \begin{figure}[h!]
            \begin{center}
                \includegraphics[width=12cm]{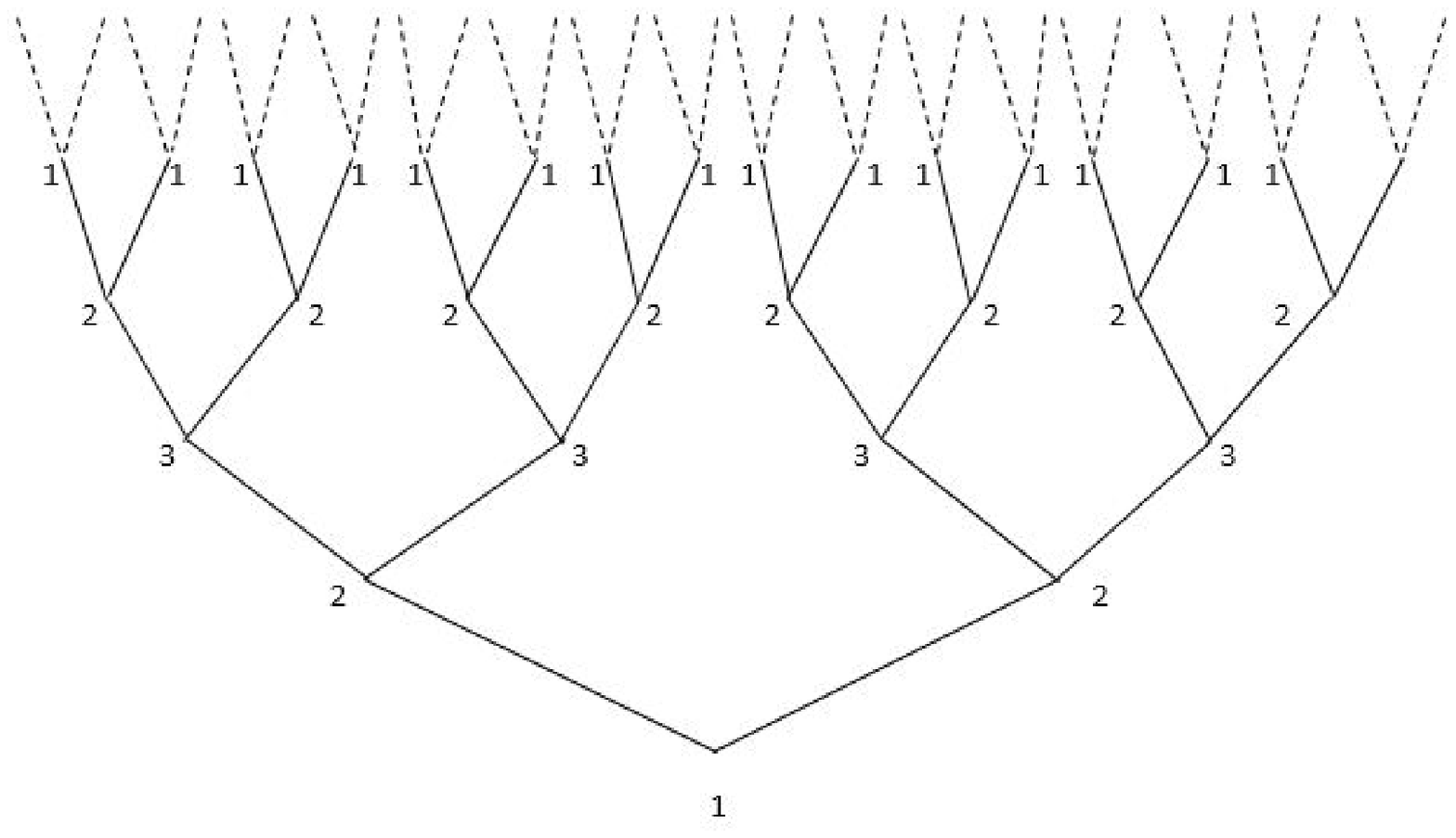}
            \end{center}
            \caption{Configuration for $\s^{(2)}$} \label{1232}
        \end{figure}

        Then, we can see that for any $b \in M$, one has
        \[U(\s^{(\xi)})=\min_{1 \leq k \leq 6}\{U_k\},\ \ \xi \in \{2n,2n+1\}\]
        which means $\s^{(n)}$ is a $G_n$-periodic ground state.

  (ii) To construct uncountable number of ground states, we consider the set
        \begin{eqnarray}\label{ncount}
        \S_{1,2,3}=\left\{(\tb_n)| \tb_n \in \{1,2,3\},\delta_{\{t_n,t_{n+1}\}}=0, n \in \mathbb{N}\right\}
        \end{eqnarray}
        where $\delta$ is the Kroneker delta. One can see that the
        set $\S_{1,2,3}$ is uncountable. Take any $\tb=(t_n)\in \S_{1,2,3}$. Let us construct a configuration by
        \begin{eqnarray*}
            \s^{(\tb)}=
            \left\{
            \begin{array}{ll}
                1,& x=(0), \\
                \tb_k,& x \in W_k,k \in V.
            \end{array}
            \right.
        \end{eqnarray*}
        \begin{figure}[h!]
            \begin{center}
                \includegraphics[width=13cm]{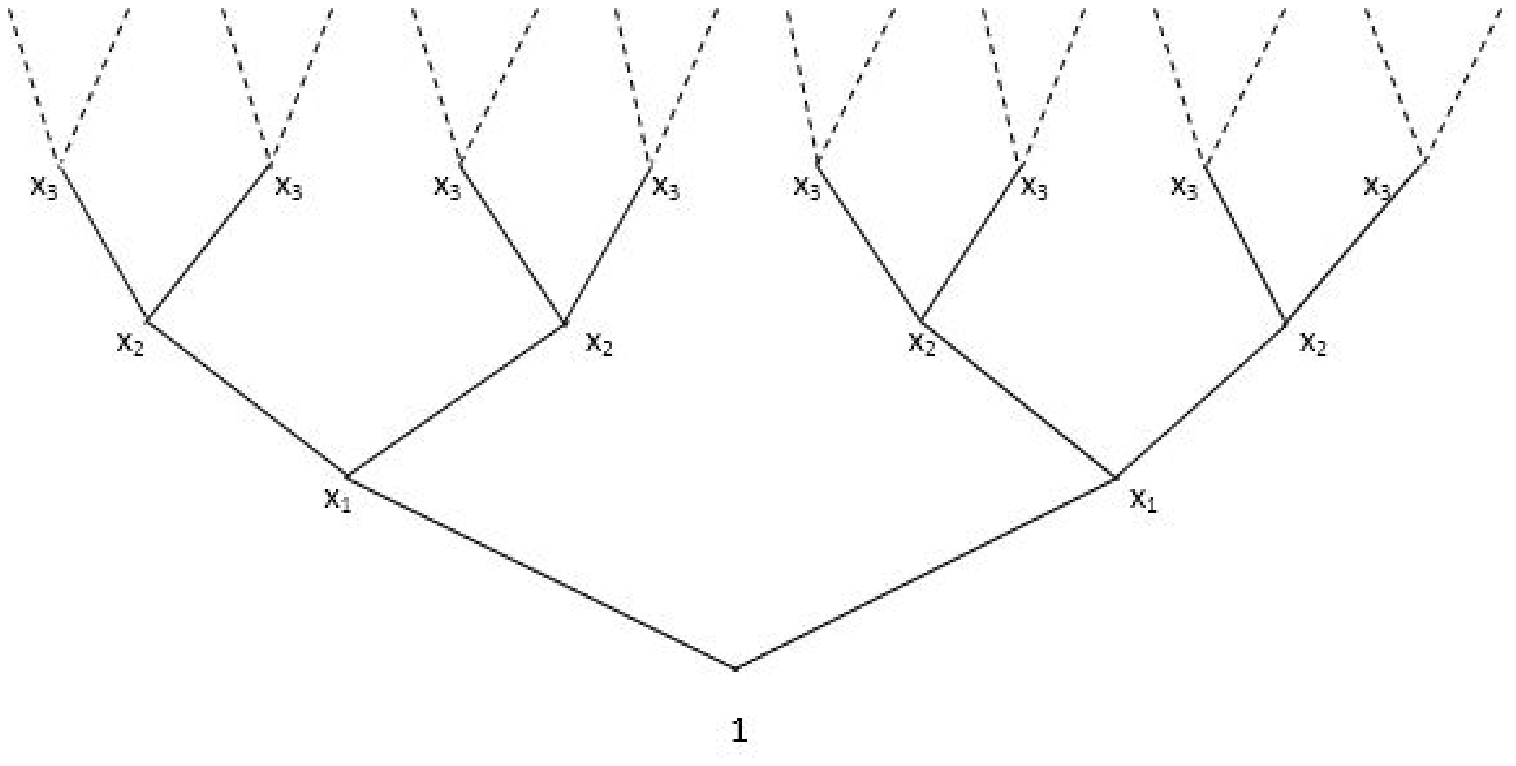}
            \end{center}
            \caption{Configuration for $\s^{(\tb)}$} \label{x1}
        \end{figure}
One can check that $\s^{(t)}$ is a ground state, and the
correspondence $\tb \in \S_{1,2,3} \rightarrow \s^{(\tb)}$ shows
that the set $\{\s^{(\tb)},\tb\in \S_{1,2,3}\}$ is uncountable. This
completes the proof.
\end{proof}

\begin{thm}
    Let $(\ab,\bb,\cb) \in A_3$, then the following statements hold:
    \begin{itemize}
        \item[(i)] there are three translation-invariant ground states;
        \item[(ii)] for every $n \in \mathbb{N}$, there is $G_n$-periodic ground state.
    \end{itemize}
\end{thm}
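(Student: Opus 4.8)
The plan is to first convert the ball-energy minimum into a purely local constraint on edges, and then read off both assertions. Since $(\ab,\bb,\cb)\in A_3$ means $\ab=\cb\le\bb$, I would begin by comparing the six values in \eqref{U_n}: one checks $U_1=U_3=U_6=\ab=\cb$ while $U_2,U_4,U_5\ge\ab$, so the minimum over $1\le k\le 6$ equals $\ab=\cb$ and is attained exactly at $U_1,U_3,U_6$. Because $U(\s_b)$ is the average of the two edge weights in the ball $b$, this minimum is reached on $b$ iff both edge weights lie in $\{\ab,\cb\}$. Hence a configuration $\v$ is a ground state iff \emph{every} edge of $\G_+^k$ carries weight $\ab$ or $\cb$, i.e. no edge joins spins with $|i-j|=1$. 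Equivalently, the only admissible nearest-neighbour pairs are $(s,s)$, $s\in\{1,2,3\}$, together with $(1,3)$ and $(3,1)$; in particular the spin $2$ may only sit next to another $2$. (If $\ab=\bb=\cb$ all configurations are ground states and the claims below are immediate, so I will assume $\ab=\cb<\bb$.)

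For (i), I would use the definition of translation invariance, i.e. $\G_+^k$-periodicity: if $\v(\tau_g(x))=\v(x)$ for all $g,x$, then setting $x=x^0$ and using $\tau_g(x^0)=g\circ x^0=g$ gives $\v(g)=\v(x^0)$ for every $g$, so $\v$ is constant. There are exactly three constant configurations, $\s_{[1]},\s_{[2]},\s_{[3]}$, and each of them makes every edge of type $(s,s)$, with weight $\cb=\min_k U_k$; by the criterion above all three are ground states. This yields precisely three translation-invariant ground states.

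For (ii), I would exhibit, for each $n$, a level-dependent ground state whose defining sequence has period $n$. Take the $n$-periodic sequence $\{k_\L\}$ with values in $\{1,3\}$, for instance $k_0=\dots=k_{n-2}=1$, $k_{n-1}=3$, and form $\s_{[k_0,\dots,k_{n-1}]}$. Consecutive levels then carry spins differing by $0$ or $2$, so every edge avoids weight $\bb$ and the configuration is a ground state by the edge criterion. Finally $\s_{[k_0,\dots,k_{n-1}]}(x)$ depends only on $d(x,x^0)\bmod n$, and since $d(g\circ x,x^0)=d(g,x^0)+d(x,x^0)$ with $d(g,x^0)\equiv0\pmod n$ for $g\in G_n$, the configuration is $G_n$-periodic; the chosen sequence has least period $n$, so for $n\ge2$ this ground state is genuinely $n$-periodic and not translation-invariant.

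The only real content is the edge characterisation in the first paragraph; once it is established both parts are short. The two points I would take care to verify are that translation invariance indeed forces constancy (the core of (i)), and that $n$-periodicity of the sequence transfers to $G_n$-periodicity of the configuration via the additivity of depth under $\circ$ (the core of (ii)). No genuine obstacle is expected beyond these bookkeeping checks.
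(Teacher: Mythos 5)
Your proposal is correct and follows essentially the same route as the paper: identify that the minimal ball energy $\ab=\cb$ is attained exactly when no edge carries weight $\bb$, exhibit the three constant configurations for (i), and use an $n$-periodic level sequence with values in $\{1,3\}$ for (ii) (you use $[1,\dots,1,3]$ where the paper uses $[1,3,\dots,3]$, which is immaterial). Your write-up is in fact more complete than the paper's, since you explicitly justify the edge criterion, the fact that translation invariance forces constancy, and the transfer of $n$-periodicity of the sequence to $G_n$-periodicity of the configuration.
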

\begin{proof}
    Let $(\ab,\bb,\cb) \in A_3$, then one can see that for this triple, the minimal value is $(\ab+\cb)/2$, which is achieved by the configurations on b given in Figure \ref{fig4}.
    \begin{figure}[h!]
        \begin{center}
            \includegraphics[width=8cm]{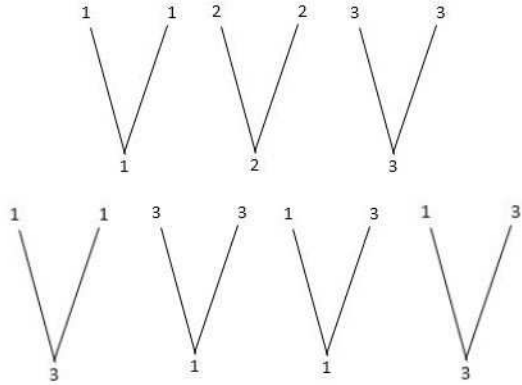}
        \end{center}
        \caption{Configuration for $A_3$} \label{fig4}
    \end{figure}

(i) In this case, we have three
        \[\s^{(k)}=\s_{[k]} ,k=\{1,2,3\} \] configurations, which are translation-invariant ground states.

  (ii) Using Figure \ref{fig4}, for each $n \in \mathbb{N}$, one can construct a configuration on $\Om$ defined by
        \begin{eqnarray*}
            \s^{(n)}=\s_{\underbrace{[1,3,3,\dots,3]}_\text{n}}.
        \end{eqnarray*}
        \begin{figure}[h!]
            \begin{center}
                \includegraphics[width=12cm]{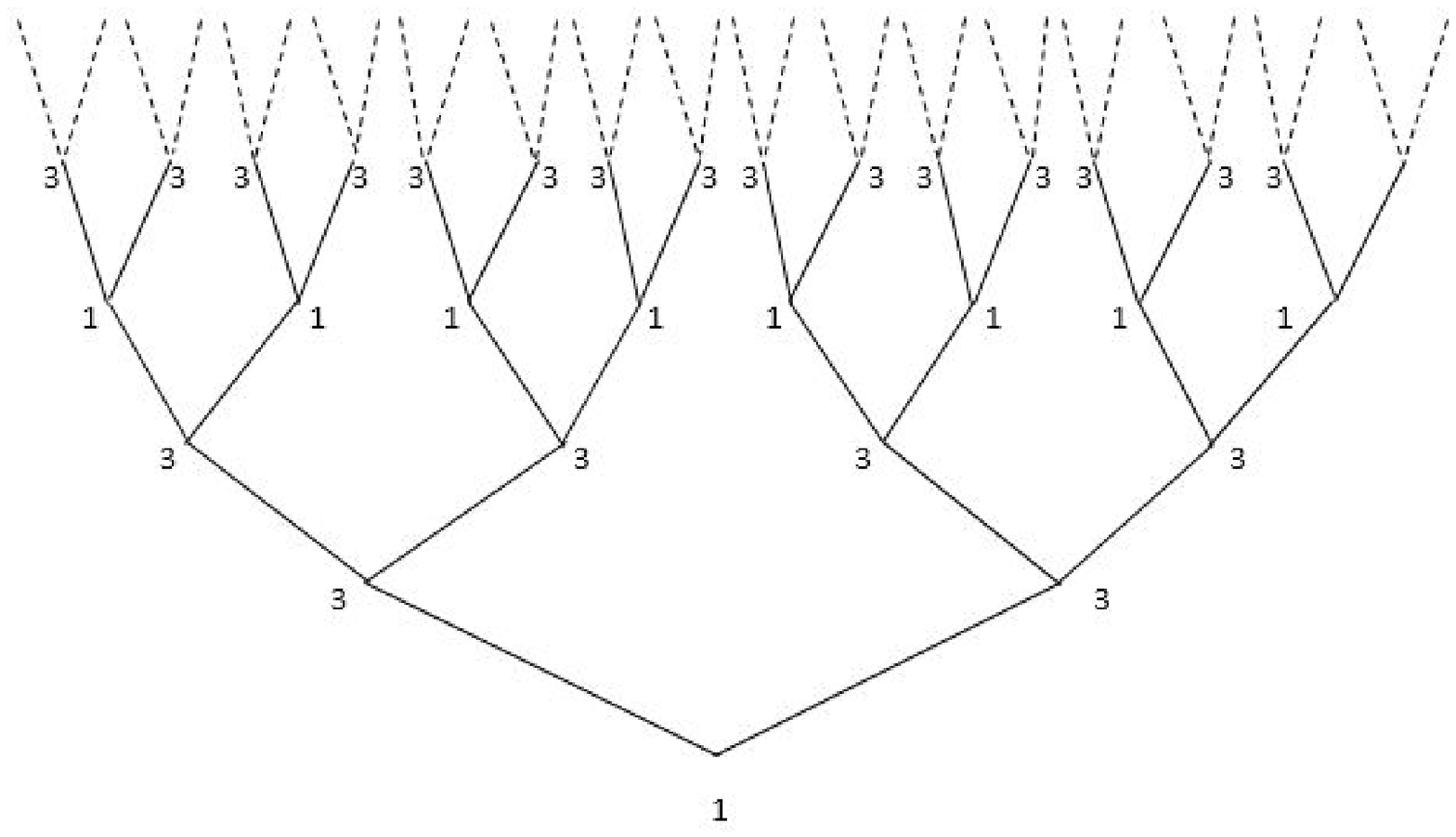}
            \end{center}
            \caption{Configuration for $\s^{(n)}$} \label{1331}
        \end{figure}

        Then, we can see that for any $b \in M$, one has
        \[U(\s^{(n)})=\min_{1 \leq k \leq 6}\{U_k\}\]
        which means $\s^{(n)}$ is a $G_{n}$-periodic ground state.
\end{proof}

\begin{thm}
    Let $(\ab,\bb,\cb) \in A_4$, then for every $n \in \mathbb{N}$, there is $G_{(3n+1)}$-periodic ground state.
\end{thm}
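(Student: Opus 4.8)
The plan is to follow the template of the previous three theorems: first pin down the minimal ball energy for $(\ab,\bb,\cb)\in A_4$, then characterise the admissible balls, then exhibit an explicit periodic configuration and verify the ground-state condition ball by ball. I would start from the earlier computation of $U_1,\dots,U_6$ together with the description of $A_4$. On $A_4$ one has $\bb\le\ab$ and $\bb\le\cb$, so $\min_{1\le k\le6}U_k=U_4=\bb$. Hence, by the definition of a ground state, a configuration $\v$ is a ground state precisely when $U(\v_b)=\bb$ for every ball $b\in M$. Writing $s$ for the spin at the centre of $b$ and $t_1,t_2$ for the spins at its two successors, the identity $U(\v_b)=\tfrac12(\l(s,t_1)+\l(s,t_2))=\bb$ forces each incident edge to contribute $\bb$, i.e. $|s-t_1|=|s-t_2|=1$ (with the extra admissible balls produced by the ties $\ab=\bb$ or $\bb=\cb$ on the boundary of $A_4$). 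This is the analogue of the figures used for $A_1,A_2,A_3$ and reduces the statement to a purely combinatorial transition rule on $\F=\{1,2,3\}$.

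The core of the argument is then to realise this transition rule by a configuration invariant under $G_{3n+1}$. Recalling from the preliminaries that $G_{3n+1}=\{x:d(x,x^0)\equiv0\ (\mathrm{mod}\ 3n+1)\}$ is a sub-semigroup and that $\tau_g(x)=g\circ x$ prepends, a $G_{3n+1}$-periodic configuration is completely determined by its values on the levels $0,1,\dots,3n$: the value at a vertex $z$ equals the value at the vertex obtained by deleting a leading block of length $3n+1$, so it depends only on the last $(|z|\bmod(3n+1))$ coordinates. I would therefore build one admissible block of spins of length $3n+1$ and propagate it down the tree by the $G_{3n+1}$-action, exactly as the $G_n$-periodic states were built for $A_2$ and $A_3$. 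Concretely I would prescribe the block as a sequence $k_0,k_1,\dots,k_{3n}$ with $|k_{\L+1}-k_\L|=1$ and, at every level carrying the spin $2$, let the two successors take the two admissible values $1$ and $3$ independently, writing the resulting configuration in the $\s_{[\,\cdot\,]}$ notation.

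The hard part, which I would isolate as a separate check, is the closure: the ball straddling the seam between two consecutive copies of the block. There the level-$3n$ spins must be adjacent to the level-$0$ spin (so that the wrap-around ball again has energy $\bb$), i.e. one needs $|k_{3n}-k_0|=1$ or an admissible tie. This is the genuine obstacle. With only the transitions $|s-t|=1$ available, the spins alternate between $\{2\}$ and $\{1,3\}$ along every branch, so every closed admissible block has \emph{even} length and a length-$(3n+1)$ closure cannot be produced in the strict interior of $A_4$. Realising the odd period $3n+1$ therefore relies on the additional admissible transition available on the relevant face of $A_4$, namely the tie $\ab=\bb$, which makes the $1$–$3$ step admissible and turns the transition graph on $\F$ into a triangle, so that closed walks of every length $\ge2$ (in particular $3n+1$) exist.

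Once a length-$(3n+1)$ block that closes consistently is in hand, the remaining verification is routine: one checks that every ball $b_x$ — both the interior balls along the block and the seam ball — realises $U(\v_b)=U_4=\bb$, whence $\v$ is a ground state by the ground-state criterion; and $G_{3n+1}$-periodicity is immediate from the construction, since $\v$ was obtained by translating a single $(3n+1)$-level block under $G_{3n+1}$. Assembling these steps for each $n\in\mathbb{N}$ completes the proof.
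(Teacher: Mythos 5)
Your identification of the minimal ball energy ($U_4=\bb$ on $A_4$) and of the admissible balls (in the interior of $A_4$ both edges of a ball must satisfy $|s-t_i|=1$) agrees with the paper, and your reduction to finding a length-$(3n+1)$ block that closes up under this transition rule is exactly where the paper's proof also lives: the paper simply writes down the block $[1,(2,3,2),\dots,(2,3,2)]$ ($n$ copies) and asserts that every ball attains the minimum. However, your completion of the argument has two genuine defects. First, you assert that the period $3n+1$ is odd; in fact $3n+1$ is even precisely when $n$ is odd, and in that case the closure is unproblematic (for $n=1$ the paper's block $[1,2,3,2]$ has all consecutive differences equal to $1$ and wraps around correctly). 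Second, and more seriously, your repair of the odd case via the tie $\ab=\bb$ does not prove the theorem as stated: the hypothesis is $(\ab,\bb,\cb)\in A_4$, i.e.\ $\bb\le\ab$ and $\bb\le\cb$, and the face $\ab=\bb$ is only a boundary piece of that region (it coincides with $A_2$, which was already treated separately). On the interior of $A_4$ the $1$--$3$ transition costs $\ab>\bb$ and is inadmissible, so your argument produces no $G_{3n+1}$-periodic ground state there when $3n+1$ is odd.

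That said, the obstruction you found is real and in fact exposes an error in the paper's own proof. For $n\ge2$ the paper's block $[1,2,3,2,2,3,2,\dots]$ contains a consecutive pair $(2,2)$, so the ball centred at such a vertex has energy $\cb>\bb$ in the interior of $A_4$ and the configuration is not a ground state. Worse, your parity argument refutes the statement outright when $3n+1$ is odd: $G_{3n+1}$-periodicity forces $\s(g)=\s(x^0)$ for every vertex $g$ at level $3n+1$, while admissibility forces the class ($\{2\}$ versus $\{1,3\}$) to alternate along every branch, so levels $0$ and $3n+1$ would have to carry opposite classes --- a contradiction. Hence in the strict interior of $A_4$ no $G_{3n+1}$-periodic ground state exists for even $n$, and the claim can only survive for odd $n$ (where the $G_2$-periodic states $\s_{[1,2]}$, $\s_{[2,3]}$ already suffice) or on the boundary of $A_4$. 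Your proposal therefore has a gap relative to the stated claim, but the gap is inherited from the claim itself; the honest conclusion is to flag the counterexample for even $n$ rather than to patch it with the $\ab=\bb$ tie.
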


\begin{proof}
    Let $(\ab,\bb,\cb) \in A_4$, then one can see that for this triple, the minimal value is $\bb$, which is achieved by the configurations on b given, in Figure \ref{fig5}.
    \begin{figure}[h!]
        \begin{center}
            \includegraphics[width=7cm]{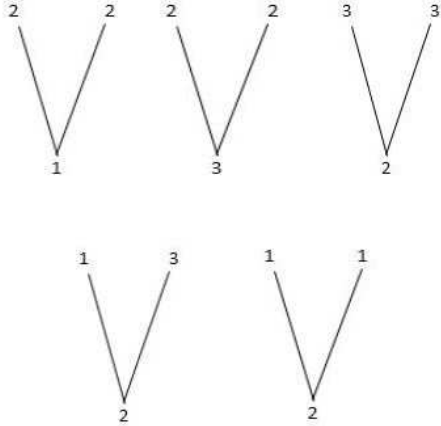}
        \end{center}
        \caption{Configurations for $A_4$} \label{fig5}
    \end{figure}

    Now, using Figure \ref{fig5}, for each $n \in \mathbb{N}$, one can construct a configuration on $\Om$ defined by
    \begin{eqnarray*}
        \s^{(3n+1)}=\s_{[1,\underbrace{(2,3,2),\dots,(2,3,2)}_n]}
    \end{eqnarray*}
    \begin{figure}[h!]
        \begin{center}
            \includegraphics[width=9cm]{1232.eps}
        \end{center}
        \caption{Configuration for $\s^{(4)}$}
    \end{figure}
    Then, we can see that for any $b \in M$, one has
    \[U(\s^{(3n+1)})=\min_{1 \leq k \leq 6}\{U_k\}\]
    which means $\s^{(3n+1)}$ is a $G_{(3n+1)}$-periodic ground state.
\end{proof}

\begin{thm}
    Let $(\ab,\bb,\cb) \in A_5$, then the following statements hold:
    \begin{itemize}
        \item[(i)] there are three translation-invariant ground states;
        \item[(ii)] for every $n \in \mathbb{N}$,there is $G_n$-periodic ground state;
        \item[(iii)] there is uncountable number of ground states.
    \end{itemize}
\end{thm}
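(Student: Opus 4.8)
The plan is to begin, exactly as in the cases $A_1$–$A_4$, by converting the minimization \eqref{eq14} into a purely local constraint on the spins along each edge. Since $(\ab,\bb,\cb)\in A_5$ means $\bb=\cb\le\ab$, among the six values in \eqref{U_n} one has $U_4=U_5=U_6=\bb=\cb$ while $U_1,U_2,U_3\ge\bb$, so that $\min_{1\le k\le 6}U_k=\bb=\cb$. A ball $b_x$ has a centre spin $s$ and two successor spins $u,v$, and $U(\s_{b_x})=\tfrac12(\l(s,u)+\l(s,v))$; since $\l$ takes the value $\ab$ exactly on the pair $\{1,3\}$, this average attains the minimum $\bb=\cb$ if and only if neither $\{s,u\}$ nor $\{s,v\}$ equals $\{1,3\}$. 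I would therefore record the characterization: $\s$ is a ground state in $A_5$ if and only if no edge of $\G_+^2$ joins the spins $1$ and $3$. Everything else reduces to exhibiting configurations obeying this single forbidden-pair rule.

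For (i), the three constant configurations $\s_{[1]},\s_{[2]},\s_{[3]}$ make every edge join equal spins (value $\cb$), hence each is a ground state by the characterization, and each is manifestly translation-invariant. To see these are the only translation-invariant ones, I would note that $\tau_g(x^0)=g$ for every $g$, so invariance under all $\tau_g$ gives $\s(g)=\s(x^0)$ for all $g\in V$, i.e. $\s$ is constant; thus exactly three translation-invariant ground states occur.

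For (ii) and (iii) I would use only level-dependent configurations $\s_{[k_\L]}$, for which an edge between levels $\L$ and $\L+1$ carries the pair $\{k_\L,k_{\L+1}\}$; the characterization then says $\s_{[k_\L]}$ is a ground state precisely when $|k_\L-k_{\L+1}|\ne 2$ for all $\L$. For (ii), for each $n$ I would take the $n$-periodic sequence $\underbrace{1,2,\dots,2}_{n}$ (repeated), whose consecutive differences, including the wrap-around $2\to 1$ across the seam, never equal $2$; a configuration depending only on the level modulo $n$ is $G_n$-periodic, so this produces a $G_n$-periodic ground state for every $n$. For (iii), I would display an uncountable family directly: put $2$ at every even level and a freely chosen symbol $a_m\in\{1,3\}$ at the $m$-th odd level, i.e. set $\tb=(2,a_1,2,a_2,2,a_3,\dots)$. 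Every edge then joins a $2$ with a $1$ or a $3$, so $|k_\L-k_{\L+1}|=1$ throughout and $\s^{(\tb)}$ is a ground state; the map $(a_m)\in\{1,3\}^{\mathbb{N}}\to\s^{(\tb)}$ is injective, whence the set of ground states is uncountable. Equivalently one may use the analogue of \eqref{ncount}, replacing the condition $t_n\ne t_{n+1}$ by $|t_n-t_{n+1}|\ne 2$.

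The routine verifications, that each displayed configuration satisfies $U(\s_b)=\min_{k}U_k$ on every $b\in M$, are immediate from the characterization. The only point needing care is the seam condition in (ii): the $n$-periodic sequence must avoid the forbidden pair across consecutive periods as well as inside a period, which is why I choose endpoints $2$ and $1$ (difference $1$) rather than $1$ and $3$. I expect the genuine content, and hence the step most worth getting right, to be the forbidden-pair characterization in the first paragraph; once that is fixed, parts (i)–(iii) follow from the explicit constructions above.
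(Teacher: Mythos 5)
Your proposal is correct and follows essentially the same route as the paper: reduce the minimality condition on each ball to the local rule that no edge may carry the pair $\{1,3\}$, then exhibit the three constant configurations for (i), the level-periodic configuration $\s_{[1,2,\dots,2]}$ for (ii), and an uncountable family of level-dependent configurations for (iii). The only (immaterial) difference is in (iii), where the paper parametrizes its family by arbitrary sequences in $\{2,3\}^{\mathbb{N}}$ rather than by fixing $2$ on even levels and choosing freely from $\{1,3\}$ on odd levels; your explicit forbidden-pair characterization and the uniqueness argument in (i) are in fact slightly more complete than what the paper writes down.
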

\begin{proof}
    Let $(\ab,\bb,\cb) \in A_5$, then one can see that for this triple, the minimal
    value is $(\bb+\cb)/2$, which is achieved by the configurations on b given in Figure \ref{fig6}.
    \begin{figure}[h!]
        \begin{center}
            \includegraphics[width=8cm]{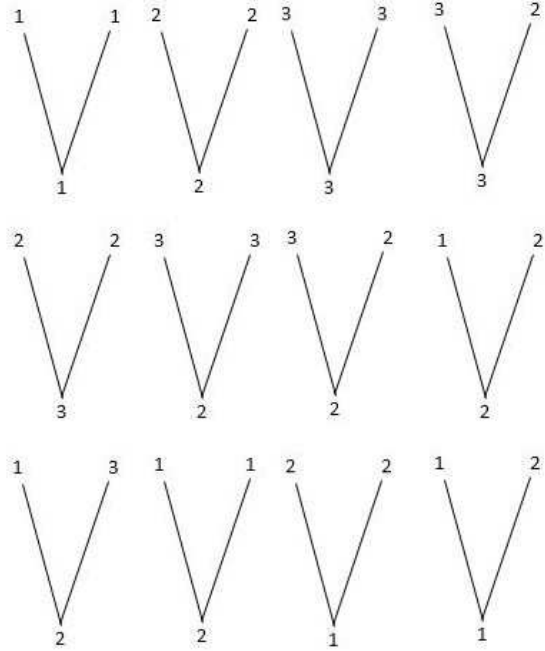}
        \end{center}
        \caption{Configuration for $A_5$} \label{fig6}
    \end{figure}

(i) In this case, we have three configurations:
        \[\s^{(k)}=\s_{[k]} ,k=\{1,2,3\} \]  which are translation-invariant ground states.

 (ii) Now, using Figure \ref{fig6}, for each $n\in \mathbb{N}$, one can see a configuration on $\Om$ defined by
        \begin{eqnarray*}
            \s^{(n)}=\s_{\underbrace{[1,2,2,\dots,2]}_\text{n}}.
        \end{eqnarray*}
        \begin{figure}[h!]
            \begin{center}
                \includegraphics[width=13cm]{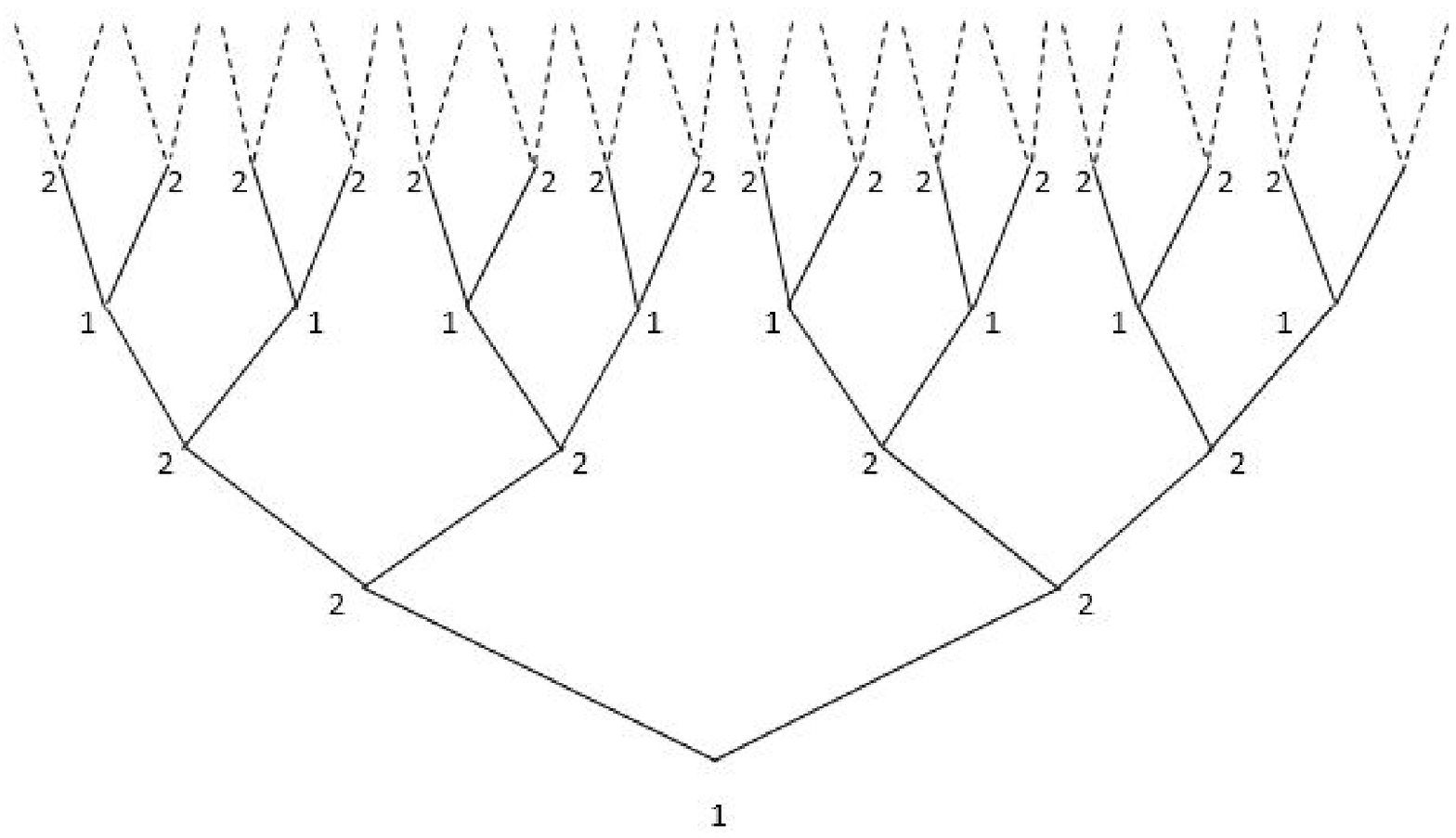}
            \end{center}
            \caption{Configuration for $\s^{(n)}$}
        \end{figure}

        Then, we can see that for any $b \in M$, one has
        \[U(\s^{(n)})=\min_{1 \leq k \leq 6}\{U_k\}\]
        which means $\s^{(n)}$ is a $G_n$-periodic ground state.

(iii) To construct uncountable number of ground states, we consider
the set
        \begin{eqnarray}\label{ncount2}
        \S_{2,3}=\left\{(\tb_n)| \tb_n \in \{2,3\},n \in \mathbb{N}\right\}
        \end{eqnarray}
        which is uncountable.
        Take any $\tb=(t_n)\in \S_{2,3}$. Let us construct a configuration by
        \begin{eqnarray*}
            \s^{(\tb)}=
            \left\{
            \begin{array}{ll}
                2,& x=(0), \\
                \tb_k,& x \in W_k,k \in V.
            \end{array}
            \right.
        \end{eqnarray*}
        \begin{figure}[h!]
            \begin{center}
                \includegraphics[width=13cm]{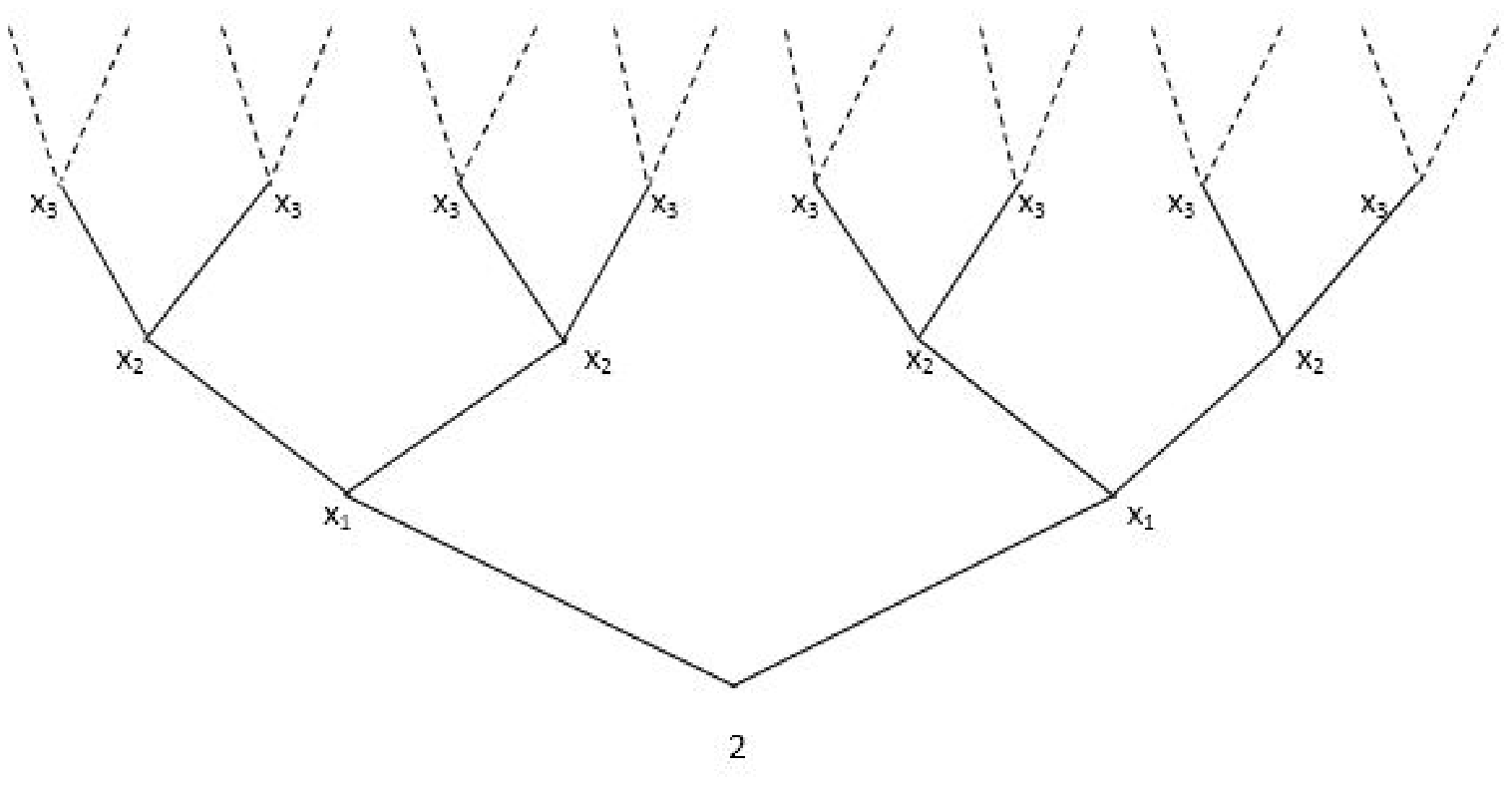}
            \end{center}
            \caption{Configuration for $\s^{(\tb)}$} \label{2,x1}
        \end{figure}
        One can check that $\s^{(\tb)}$ is a ground state, and the correspondence $\tb \in \S_{2,3} \rightarrow \s^{(t)}$ shows that the set $\{\s^{(\tb)},\tb\in \S_{2,3}\}$ is uncountable.
 \end{proof}

\begin{thm}
    Let $(\ab,\bb,\cb) \in A_6$, then there are only three transition-invariant ground states.
\end{thm}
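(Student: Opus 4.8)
The plan is to exploit the rigidity of the minimizer on $A_6$: there the only ball configuration attaining the minimum is the monochromatic one, and this forces every ground state to be constant, leaving exactly three.

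First I would record that for $(\ab,\bb,\cb)\in A_6$ the defining inequalities $\cb\le\ab$ and $\cb\le\bb$ make $\cb$ the least of the three coupling values, so by \eqref{U_n} we have $U_6=\cb=\min_{1\le k\le 6}U_k$. Writing $S(x)=\{y_1,y_2\}$ for a ball $b=b_x$, the quantity $U(\v_b)=\tfrac12\bigl(\l(\v(x),\v(y_1))+\l(\v(x),\v(y_2))\bigr)$ is the average of two weights, each of which lies in $\{\ab,\bb,\cb\}$ and is therefore $\ge\cb$. Hence $U(\v_b)=\cb$ can hold only if both weights equal $\cb$, and by \eqref{cond} one has $\l(u,v)=\cb$ exactly when $u=v$. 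Thus a ball realizes the minimum if and only if $\v(x)=\v(y_1)=\v(y_2)$, i.e. it is monochromatic.

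Next I would feed this into the ground-state condition \eqref{eq14}, which demands $U(\v_b)=\cb$ for every $b\in M$; consequently every ball of a ground state is monochromatic. Since each edge $\langle x,y\rangle$ of $\G_+^2$ sits inside the ball centered at its upper endpoint, monochromaticity of all balls gives $\v(x)=\v(y)$ along every edge, and connectedness of $\G_+^2$ then forces $\v$ to be globally constant, so $\v=\s_{[k]}$ for some $k\in\{1,2,3\}$. Conversely each $\s_{[k]}$ makes every ball monochromatic and so attains the minimum, so $\s_{[1]},\s_{[2]},\s_{[3]}$ are ground states, and each is visibly translation-invariant.

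The only delicate point, and the part I expect to require the most care, is confirming that these three exhaust the translation-invariant ground states even on the boundary of $A_6$ (where $\cb=\ab$ or $\cb=\bb$ and extra, non-constant minimizing balls are allowed). Here I would use the semigroup structure directly rather than the rigidity argument: a $\G_+^2$-periodic $h$ satisfies $h(g\circ x)=h(x)$ for all $g$, and taking $x$ to be the unit $(0)$ gives $h(g)=h((0))$ for every $g$, so translation invariance already forces a configuration to be constant. Therefore any translation-invariant ground state must be one of $\s_{[1]},\s_{[2]},\s_{[3]}$, giving precisely three.
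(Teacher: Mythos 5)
Your proof is correct and its core follows the same route as the paper's: identify $\cb$ as the minimal value $U_6$ on $A_6$, observe that a ball attains this minimum only when it is monochromatic, and exhibit the three constant configurations $\s_{[1]},\s_{[2]},\s_{[3]}$ as translation-invariant ground states. Where you go beyond the paper is in the word ``only'': the paper's proof merely displays the three configurations (pointing to the figure of monochromatic balls) and never argues that no other translation-invariant ground state exists, whereas you supply two independent exhaustiveness arguments. The first (each edge lies in the ball centred at its upper endpoint, so monochromaticity of every ball plus connectedness of $\G_+^2$ forces the configuration to be constant) settles the interior of $A_6$, where $\cb$ is strictly smaller than $\ab$ and $\bb$. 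The second (a $\G_+^2$-periodic function $h$ satisfies $h(g)=h(g\circ(0))=h((0))$ for every $g$, so translation invariance alone forces constancy) is genuinely needed on the boundary, since $A_6$ as defined overlaps $A_5$, where the paper itself constructs uncountably many non-constant ground states; your observation reconciles the $A_6$ statement with the $A_5$ theorem, a point the paper leaves unaddressed. In short, same approach, but you prove the uniqueness half of the claim that the paper only asserts.
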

\begin{proof}
    Let $(\ab,\bb,\cb) \in A_6$,then one can see that for this triple, the minimal value is $\cb$,
    which is achieved by the configurations on b given in Figure \ref{fig7}.
    \begin{figure}[h!]
        \begin{center}
            \includegraphics[width=8cm]{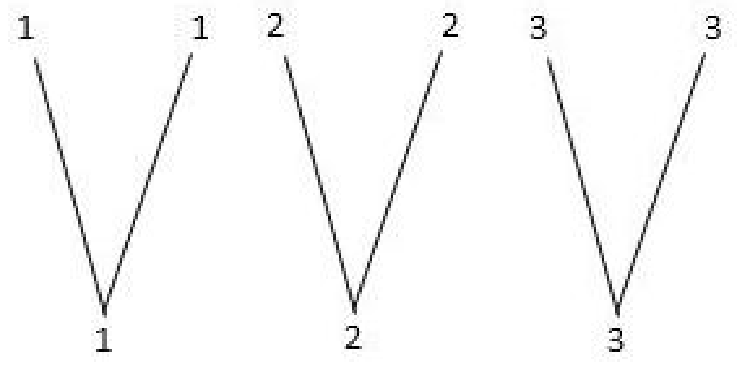}
        \end{center}
        \caption{Configurations for $A_6$} \label{fig7}
    \end{figure}

    In this case, we have three configurations:
    \[\s^{(k)}=\s_{[k]} ,k=\{1,2,3\} \]  which are translation-invariant ground states.
\end{proof}

\section{Construction of Gibbs States for the $\l$-model}

We define a finite-dimensional  distribution of probability measure
$\m^{(n)}$ in a volume $V_n$ as
\begin{eqnarray}\label{eq2}
\m^{(n)}(\s_n)=Z^{-1}_n{\exp\{\b H_n(\sigma_n)+\sum\limits_{x\in W_n} h_{\sigma(x),x}\}\ \ ,\s_n \in \F^{V_n}}
\end{eqnarray}
where $\b=1/T$, $T > 0$ is the temperature, and
\begin{eqnarray*}
    Z_n^{-1}=\sum_{\s\in\F^{V_n}}{\exp\{\b H_n(\sigma_n)+\sum\limits_{x\in W_n} h_{\sigma(x),x}\}}
\end{eqnarray*}
is the normalizing factor. In \eqref{eq2},
$\{h_x=(h_{1,x},\dots,h_{q,x}) \in \mathbb{R},x \in V \}$ is the set
of vectors, and
\begin{eqnarray*}
    H_n(\sigma_n)=\sum\limits_{<x,y>\in L_n}\l(\sigma (x),\sigma (y))
\end{eqnarray*}
We say that sequence of a probability distribution $\{\mu^{(n)}\}$ is consistent if for all $n \geq 1$ and $\s_{n-1}\in \F^{V_{n-1}}$ one has

\begin{eqnarray}\label{eq3}
\sum_{\w_n \in \F^{W_n}}\mu^{(n)}(\s_{n-1}\vee
\w_n)=\m^{(n-1)}(\s_{n-1})
\end{eqnarray}

Here,$\s_{n-1}\vee \w_n$ is the union of all configurations. In this case, we have a unique measure $\m$ on $\F^V$ such that for all $n$ and $\s_n \in \F^{V_n}$, we have
\begin{eqnarray*}
    \m(\{\s \vert V_n=\s_n\})=\m^{(n)}(\s_n)
\end{eqnarray*}
Such a measure is called a \textit{splitting Gibbs measure}
corresponding to Hamiltonian \eqref{ham} and to the vector-valued
function $h_x,x \in V$ (see \cite{Roz} for more information about
splitting measures).

The next statement describes the condition on $h_x$ ensuring that
the sequence $\{\m^{(n)}\}$ is consistent.

\begin{thm}\label{cons}
    The measures $\mu^{(n)},\ n=1,2,\dots ,$ satisfy the consistency condition  if and only if for any $x \in V$ the following equation holds:

    \begin{eqnarray}\label{eq4}
    u_{k,x}=\prod_{y \in S(x)}\dfrac{\sum_{j=1}^{q-1}\exp\{\b \l (k,j)\}u_{j,y}+\exp\{\b \l(k,q)\}}{\sum_{j=1}^{q-1}\exp\{\b \l (q,j)\}u_{j,y}+\exp\{\b \l(q,q)\}},
    \end{eqnarray}
    where $u_{k,x}=\exp\{h_{k,x}-h_{q,x}\}$,$k=\overline{1,q-1}$
\end{thm}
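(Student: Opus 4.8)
The plan is to prove the equivalence by a direct computation that exploits the loopless structure of the Cayley tree, which is the standard mechanism for splitting Gibbs measures. The starting observation is that $L_n\setminus L_{n-1}$ consists precisely of the edges joining a vertex $x\in W_{n-1}$ to its successors $y\in S(x)$, so for any $\s_{n-1}\in\F^{V_{n-1}}$ and $\w_n\in\F^{W_n}$ the Hamiltonian decomposes as
\[
H_n(\s_{n-1}\vee\w_n)=H_{n-1}(\s_{n-1})+\sum_{x\in W_{n-1}}\ \sum_{y\in S(x)}\l(\s(x),\w(y)).
\]
Substituting this into \eqref{eq2} and summing over $\w_n$, the factor $\exp\{\b H_{n-1}(\s_{n-1})\}$ pulls out of the sum, and, because each $y\in W_n$ lies in exactly one $S(x)$, the remaining sum over $\w_n$ factorizes over the vertices of $W_n$.

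First I would carry out this factorization explicitly. Writing $F_x(i):=\prod_{y\in S(x)}\big(\sum_{j=1}^q\exp\{\b\l(i,j)+h_{j,y}\}\big)$, the left-hand side of \eqref{eq3} becomes $Z_n^{-1}\exp\{\b H_{n-1}(\s_{n-1})\}\prod_{x\in W_{n-1}}F_x(\s(x))$, while the right-hand side, namely \eqref{eq2} at level $n-1$, equals $Z_{n-1}^{-1}\exp\{\b H_{n-1}(\s_{n-1})\}\prod_{x\in W_{n-1}}\exp\{h_{\s(x),x}\}$. Cancelling the common factor $\exp\{\b H_{n-1}\}$ leaves an identity, valid for every boundary configuration $(\s(x))_{x\in W_{n-1}}$, between two products indexed by $W_{n-1}$.

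The conceptually central step is then a \emph{separation of variables}: fix a single vertex $x\in W_{n-1}$ and compare the identity for a configuration with $\s(x)=k$ against the one with $\s(x)=q$, keeping all other boundary spins unchanged. Every factor attached to the other vertices, together with the unknown ratio $Z_{n-1}/Z_n$, cancels, yielding $F_x(k)/F_x(q)=\exp\{h_{k,x}-h_{q,x}\}$. Dividing the numerator and denominator of each factor of $F_x$ by $\exp\{h_{q,y}\}$ and recalling $u_{j,y}=\exp\{h_{j,y}-h_{q,y}\}$ converts this ratio into exactly \eqref{eq4}. For the converse, assuming \eqref{eq4} I would set $a_x:=F_x(q)\exp\{-h_{q,x}\}$, note that \eqref{eq4} is precisely $F_x(k)/F_x(q)=\exp\{h_{k,x}-h_{q,x}\}$ and hence $F_x(i)=a_x\exp\{h_{i,x}\}$ for every $i$ (the case $i=q$ being trivial), multiply over $x\in W_{n-1}$, and observe that the left- and right-hand sides of \eqref{eq3} then agree up to the constant $Z_{n-1}\prod_x a_x/Z_n$; since both are normalized probability distributions, summing over $\s_{n-1}$ forces this constant to be $1$, recovering \eqref{eq3}.

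I expect the main obstacle to lie in making the separation-of-variables step rigorous rather than in the algebra: one must justify that an equality of two products over $W_{n-1}$, holding for \emph{all} boundary configurations, can be decoupled into a single-vertex identity. The clean justification is the ratio argument above, which has the added virtue of simultaneously eliminating both unknown normalizers $Z_n$ and $Z_{n-1}$ — the very feature that makes \eqref{eq4} independent of them. A minor point to treat carefully is that $\s_{n-1}$ ranges over configurations on all of $V_{n-1}$, so I must remark that after cancelling $\exp\{\b H_{n-1}\}$ only the spins on the outer shell $W_{n-1}$ remain in the identity, and these may indeed be varied freely and independently.
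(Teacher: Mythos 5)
Your proposal is correct and follows essentially the same route as the paper's own proof: the same decomposition $H_n(\s_{n-1}\vee\w_n)=H_{n-1}(\s_{n-1})+\sum_{x\in W_{n-1}}\sum_{y\in S(x)}\l(\s(x),\w(y))$, the same factorization over $W_{n-1}$, the same ratio argument comparing the boundary configuration with $\s(x)=k$ against the one with $\s(x)=q$ to cancel the normalizers, and the same sufficiency step introducing $a(x)$ with $F_x(i)=a(x)\exp\{h_{i,x}\}$ and concluding $Z_n=A_{n-1}Z_{n-1}$ by normalization. Your write-up is a somewhat cleaner rendering of the same argument, but not a different one.
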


\begin{proof}[Proof: Necessity]
    According to the consistency condition \eqref{eq3}, we have
    \[\sum_{w \in \F^{V_n}} \dfrac{1}{Z_n}\exp\{\b H_n(\s_{n-1}\vee \w)+\sum_{x \in W_n}h_{\w(x),x}\}=\dfrac{1}{Z_{n-1}}\exp\{\b H_{n-1}(\s_{n-1})+\sum_{x \in W_{n-1}}h_{\s(x),x})\}\]
    Keeping in mind that
    \begin{align*}
    H_n(\s_{n-1}\vee \w_n)&=\sum_{\s \in \F^{V_{n-1}}}\l(\s (x),\s (y))+\sum_{\substack{x \in W{n-1}\\y \in S(x)}}\l(\s (x),\w (y))\\
    &=H_{n-1}(\s)+\sum_{x \in W_{n-1}}\sum_{y \in S(x)}\l(\s (x),\w (y)),
    \end{align*}

We have

    \begin{eqnarray*}
        \dfrac{Z_{n-1}}{Z_n}\sum_{\w \in \F^{V_n}}\exp\{\b\sum_{x \in W_{n-1}}\sum_{y \in S(x)}\l(\s (x),\w (y))+\b\sum_{x \in W_{n-1}}\sum_{y \in S(x)}h_{\w (x),(x)}=\exp\{\sum_{x \in W_{n-1}}h_{\s (x),(x)}\}
    \end{eqnarray*}
which yields
    \begin{eqnarray}\label{eq5}
    \dfrac{Z_{n-1}}{Z_n}\prod_{x \in W_{n-1}}\prod_{y \in S(x)}\sum_{\w \in \F^{V_n}}\exp\{\b \l(\s (x),\w (y))+h_{\w (y),(y)}=\prod_{x \in W_{n-1}}\exp\{h_{\s (x),(x)}\}
    \end{eqnarray}
    Considering configurations $\overline{\s}^{(k)} \in \F^{v_{n-1}}$,$\F=\{1,\dots,q\}$ such that $\s(x)=k$ for fixed $x \in V$ and
    $k=\overline{1,q}$, and dividing \eqref{eq5} at $\overline{\s}^{(k)}$ by \eqref{eq5} at $\overline{\s}^{(q)}$, one
    gets

    \begin{eqnarray}
    \prod_{y \in S(x)}\dfrac{\sum_{w \in \F}\exp\{\b\l(k,\w(y))+h_{\w(y),y}\}}{\sum_{\w \in \F}\exp\{\b\l(q,\w(y))+h_{\w(y),y}\}}=\dfrac{\exp\{h_{k,x}\}}{\exp\{h_{q,x}\}}.
    \end{eqnarray}
    So,
    \begin{eqnarray}\label{eq6}
    \prod_{y \in S(x)}\dfrac{\sum_{w \in \F}^{q}\exp\{\b\l(k,j)+h_{j,y}\}}{\sum_{w \in \F}^{q}\exp\{\b\l(q,j)+h_{j,y}\}}=\exp\{h_{k,x}-h_{q,x}\}.
    \end{eqnarray}

    Hence, by denoting $u_{k,x}=\exp\{h_{k,x}-h_{q,x}\}$ from \eqref{eq6} one finds

    \begin{eqnarray}\label{eq7}
    \prod_{y \in S(x)}\dfrac{\sum_{j=1}^{q-1}\exp\{\b\l(k,j)+u_{j,y}\}+\exp\{\b\l(k,q)\}}{\sum_{j=1}^{q-1}\exp\{\b\l(q,j)+u_{j,y}\}+\exp\{\b\l(q,q)\}\}}=u_{k,x}
    \end{eqnarray}
\end{proof}

\begin{proof}[Sufficiency]
    Suppose that \eqref{eq3} holds, then we get \eqref{eq6}. which yields that

    \begin{eqnarray}\label{eq8}
    \prod_{y \in S(x)}\sum_{j=1}^{q}\exp\{\b\l(k,j)+h_{j,y}\}=a(x)\exp\{h_{k,x}\},\ k=\overline{1,q},\ x \in W_{n-1}
    \end{eqnarray}
    for some function $a(x)>0$, $x \in V$.

    Let us multiply \eqref{eq8} with respect to $x \in W_{n-1}$, then we obtain

    \begin{eqnarray}\label{eq9}
    \prod_{x \in W_{n-1}}\prod_{y \in S(x)}\sum_{j=1}^{q}\exp\{\b\l(\s(x),j)+h_{j,y}\}=\prod_{x \in W_{n-1}}(a(x)\exp\{h_{k,x}\})
    \end{eqnarray}
    for any configuration $\s \in \F^V_{n-1}$. Denoting $A_{n-1}=\prod_{x \in W_{n-1}}a(x)$, from \eqref{eq9}, one finds

    \begin{eqnarray}\label{eq10}
    \prod_{x \in W_{n-1}}\prod_{y \in S(x)}\sum_{w \in \F}\exp\{\b\l(\s(x),\w(y))+h_{\w,y}\}=A_n\prod_{x \in W_{n-1}}\exp\{h_{\s(x),x}\}
    \end{eqnarray}
    We multiply both sides of \eqref{eq10} by $\exp\{\b H_{n-1}(\s)\}$, we get

    \begin{eqnarray*}
        \exp\{\b H_{n-1}(\s)\}\prod_{x \in W_{n-1}}\prod_{y \in S(x)}\sum_{\w \in \F}\exp\{\b\l(\s(x),\w(y))+h_{\w(y),y}\}=A_n\exp\{\b H_{n-1}(\s)\}\prod_{x \in W_{n-1}}\exp\{h_{\s(x),x}\}
    \end{eqnarray*}
    which yields

    \begin{eqnarray}\label{eq11}
    Z_n\sum_{\w_{n} \in \F^{V_n}}\m^{(n)}(\s \vee \w_n)=A_{n-1}Z_{n-1}\m^{(n-1)}(\s)
    \end{eqnarray}
    since $\m^{(n)}(\s \vee \w_n)$, $n \geq 1$ is a probabilistic measure, we have

    \begin{eqnarray*}
        \sum_{\w_n \in \F^{V_n}}\m^{(n)}(\s \vee \w_n)=\m^{(n-1)}(\s)=1
    \end{eqnarray*}
    which from \eqref{eq11} yields
    \[Z_n=A_{n-1}Z_{n-1}\]
    This completes the proof.
\end{proof}

\section{Description of translation-invariant Gibbs measures.}

In this section, we are going to establish the existence of phase
transition for the $\l$-model give by \eqref{cond}. As before, in
what follows, we assume that $k=2$, $q=3$.

To establish a phase transition, we will find translation-invariant
Gibbs measures. Here, by translation-invariant Gibbs measure we mean
a splitting Gibbs measure which correspond to a solution
$\mathbf{u}_x$ of the equation \eqref{eq7} which is
translation-invariant, i.e. $\mathbf{u}_x=\mathbf{u}_y$ for all $x,y
\in V$. This means $\mathbf{u}_x=\mathbf{u}$, where
$\mathbf{u}=(u_1,u_2),\ u_1,u_2 > 0$. Due to Theorem \ref{cons},
$u_1$ and $u_2$ must satisfy the following equation:
\begin{eqnarray}\label{lam}
u_1=\left(\dfrac{{u_1}\x+u_2\y+\z}{{u_1}\z+u_2\y+\x}\right)^2,
\ u_2=\left(\dfrac{{u_1}\y+u_2\x+\y}{{u_1}\z+u_2\y+\x}\right)^2,
\end{eqnarray}
where $\x=\exp\{\b\cb\},\ \y=\exp\{\b\bb\},\ \z=\exp\{\b\ab\}$ (here
we have used \eqref{cond}).

From \eqref{lam} one can see that $u_1=1$ is invariant line of the equation. Therefore, the equation over this invariant line reduces to
\begin{eqnarray}\label{f}
u_2=\left(\dfrac{\x{u_2}+2\y}{\y{u_2}+\x+\z}\right)^2.
\end{eqnarray}
Denoting
\begin{eqnarray}\label{value}
x=\dfrac{u_2\x}{2\y},&a=\dfrac{1}{8\y^5},&b=\dfrac{\x(\x+\z)}{2\y^2},
\end{eqnarray}
we rewrite \eqref{f} as follows:
\begin{eqnarray}\label{gen}
ax=\left(\dfrac{1+x}{b+x}\right)^2.
\end{eqnarray}
Since $x > 0,k \geq 1,a > 0,$ and $b > 0$;\cite[Proposition
10.7]{Pr} implies the following
\begin{lem}\label{lem1}
    \begin{itemize}
        \item[(1).] If $\ b \leq 9$ then a solution to \eqref{gen} is unique.

        \item[(2).] If $\ b > 9$ then there are $\e_1(b)$ and $\e_2(b)$ such that $0 < \e_2 < \e_2$ and if $\e_1 < a < \e_2$ then \eqref{gen} has three solutions.

        \item[(3).] If $a=\e_1$ and $a=\e_2$ then \eqref{gen} has two solutions. The quantity $\e_1$ and $\e_2$ are determined from the formula
        \begin{eqnarray}\label{eta}
        \e_i(b)=\dfrac{1}{x_i}\left(\dfrac{1+x_i}{b+x_i}\, \right)^2, \ i=1,2,
        \end{eqnarray}
        where $x_1$ and $x_2$ are solutions to the equation $x^2+(3-b)x+b=0$.
    \end{itemize}
\end{lem}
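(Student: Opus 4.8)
The plan is to convert the counting problem for \eqref{gen} into a level-set analysis of a single auxiliary function. For $x>0$ I would divide \eqref{gen} by $x$ and rewrite it as $a=f(x)$, where
\[
f(x)=\frac{(1+x)^2}{x(b+x)^2}.
\]
Then the number of solutions of \eqref{gen} for a fixed $a>0$ is exactly the number of preimages of $a$ under $f$ in $(0,\infty)$. First I would record the boundary behaviour: $f(x)\to+\infty$ as $x\to0^+$ and $f(x)\to0$ as $x\to+\infty$, while $f(x)>0$ throughout $(0,\infty)$. This already pins down the two asymptotic regimes against which all crossings will be counted.

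Next I would analyse the monotonicity of $f$ via its logarithmic derivative,
\[
\frac{f'(x)}{f(x)}=\frac{2}{1+x}-\frac{1}{x}-\frac{2}{b+x}.
\]
Clearing the positive common denominator $x(1+x)(b+x)$ shows that the sign of $f'(x)$ is opposite to that of $Q(x):=x^2+(3-b)x+b$, which is precisely the quadratic appearing in \eqref{eta}. Its discriminant is $(3-b)^2-4b=(b-1)(b-9)$, and by Vieta the roots have product $b>0$ and sum $b-3$. Hence for $b\le9$ one checks that $Q$ is nonnegative on $(0,\infty)$: for $1<b<9$ it has no real roots, for $0<b\le1$ both roots are negative (product positive, sum negative), and at $b=9$ it has the double root $x=3$. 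In each of these cases $f'\le0$ with equality at most at a single point, so $f$ is a strictly decreasing bijection of $(0,\infty)$ onto $(0,\infty)$; since $a>0$ lies in the range, this yields the unique solution claimed in (1).

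For $b>9$ the discriminant is positive and, by the same Vieta signs, both roots $x_1<x_2$ are positive, with $Q<0$ exactly on $(x_1,x_2)$. Consequently $f$ decreases on $(0,x_1)$, increases on $(x_1,x_2)$, and decreases on $(x_2,\infty)$, so $x_1$ is a local minimum and $x_2$ a local maximum, giving $\e_1:=f(x_1)<\e_2:=f(x_2)$ with both positive; this is exactly the formula \eqref{eta}. A level-set count on each of the three monotone pieces, using the intermediate value theorem together with the limits at $0$ and $\infty$, then produces one solution when $a<\e_1$ or $a>\e_2$, three solutions when $\e_1<a<\e_2$ (proving (2)), and two solutions when $a=\e_1$ or $a=\e_2$ (proving (3)). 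The main obstacle is the algebraic reduction of $f'=0$ to the clean quadratic $Q$ and the accompanying sign bookkeeping across the parameter ranges of $b$; once the decreasing--increasing--decreasing profile and the strict inequality $\e_1<\e_2$ are secured, the crossing count on each interval is routine.
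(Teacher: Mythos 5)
Your argument is correct. Note, however, that the paper does not prove this lemma at all: it simply invokes \cite[Proposition 10.7]{Pr} (Preston's book), where the general statement for the equation $ax=\bigl(\frac{1+x}{b+x}\bigr)^k$ with threshold $b>\bigl(\frac{k+1}{k-1}\bigr)^2$ is established; the case $k=2$ gives the threshold $9$. Your proof is essentially a self-contained reconstruction of that cited result in the special case $k=2$, and every step checks out: the numerator of $f'/f$ over the common denominator $x(1+x)(b+x)$ is indeed $-(x^2+(3-b)x+b)$, the discriminant $(b-1)(b-9)$ together with Vieta's relations (product $b>0$, sum $b-3$) correctly sorts the sign of $Q$ on $(0,\infty)$ across all ranges of $b>0$, and the decreasing--increasing--decreasing profile with limits $+\infty$ at $0^+$ and $0$ at $+\infty$ yields exactly the claimed solution counts, including the strict inequality $\e_1<\e_2$ (which also silently corrects the typo ``$0<\e_2<\e_2$'' in the statement, and reads the ``and'' in item (3) as ``or''). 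What your approach buys is a verifiable, elementary proof in place of an external citation; what it costs is generality, since Preston's proposition covers arbitrary $k\geq 1$ while your sign analysis is hardwired to the exponent $2$.
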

From Lemma \ref{lem1} we conclude the following result:

\begin{thm}\label{phase}
    If condition (2) of lemma \ref{lem1} is satisfied, then there occurs a phase transition.
\end{thm}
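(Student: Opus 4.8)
The plan is to use the standard dictionary between the occurrence of a phase transition and the non-uniqueness of translation-invariant Gibbs measures: a phase transition occurs precisely when at least two distinct Gibbs measures coexist at the same temperature. By Theorem \ref{cons}, every translation-invariant splitting Gibbs measure for the $\l$-model corresponds to a translation-invariant solution $\mathbf{u}=(u_1,u_2)$, $u_1,u_2>0$, of the system \eqref{lam}; conversely, the sufficiency part of Theorem \ref{cons} together with the Kolmogorov extension theorem shows that each such solution produces a well-defined translation-invariant Gibbs measure. Hence it suffices to exhibit at least two admissible solutions of \eqref{lam} under condition (2) of Lemma \ref{lem1}.

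First I would pass to the invariant line $u_1=1$ noted in the text, on which \eqref{lam} collapses to the single scalar equation \eqref{f}. The change of variables \eqref{value} converts \eqref{f} into the normal form \eqref{gen}, namely $ax=\bigl(\tfrac{1+x}{b+x}\bigr)^2$ with $a,b>0$. Condition (2) of Lemma \ref{lem1} is exactly the requirement $b>9$ and $\e_1(b)<a<\e_2(b)$, in which regime \eqref{gen} has three distinct positive roots $x^{(1)}<x^{(2)}<x^{(3)}$. Inverting the substitution through $u_2=2\y x/\x$ (a positive rescaling) yields three distinct positive values $u_2^{(1)},u_2^{(2)},u_2^{(3)}$, hence three distinct admissible solutions $(1,u_2^{(i)})$, $i=1,2,3$, of \eqref{lam}.

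It then remains to verify that these three solutions induce genuinely different Gibbs measures, and this is the delicate point. Here I would compute the single-site marginal of the measure $\mu_i$ attached to the boundary law $(1,u_2^{(i)})$: reading off \eqref{eq2} and summing out the configuration on the complement of a fixed vertex, the probability that the vertex carries a given spin value is an explicit rational function of $(u_1,u_2)=(1,u_2^{(i)})$. Because this marginal depends strictly monotonically on $u_2$ on the relevant range, the distinct values $u_2^{(i)}$ force pairwise distinct marginals, so $\mu_1,\mu_2,\mu_3$ are pairwise different. Thus at least two translation-invariant Gibbs measures coexist, which by definition means a phase transition occurs.

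I expect the main obstacle to be exactly the distinctness argument of the last paragraph: a priori two different boundary laws could define the same splitting Gibbs measure, so one must rule out that the recursion identifies the three solutions. This reduces to establishing the strict monotonicity (or injectivity) of the single-site law as a function of $u_2$, which follows from the explicit form of \eqref{eq2}. The remaining ingredients, namely the reduction to \eqref{gen} and the counting of roots, are immediate from the text and from Lemma \ref{lem1}, so they pose no real difficulty.
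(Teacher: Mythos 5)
Your proposal is correct and follows the same route the paper intends: the paper offers no explicit proof, simply deducing the theorem from Lemma \ref{lem1}(2) via the chain ``three positive roots of \eqref{gen} $\Rightarrow$ three translation-invariant solutions of \eqref{lam} $\Rightarrow$ coexisting Gibbs measures.'' The one step you add --- that distinct boundary laws $(1,u_2^{(i)})$ yield distinct measures --- is omitted by the paper but is standard and easy here, since the single-site marginal at a vertex is proportional to $u_i/(1+u_1+u_2)$, which separates the three solutions.
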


Let us consider some concrete examples.

\begin{ex}
    Let $b=10$. We have $x_1=2$ and $x_2=5$. Then we have $\e_1=1/32$ and $\e_2=4/125$. So from theorem \ref{phase}, we can conclude that if $\dfrac{1}{32}<\dfrac{2\y^3}{\x^3}<\dfrac{4}{125}$, then there occurs a phase transition.
\end{ex}

\section{Periodic Gibbs Measure}

In this section, we are going to study 2-periodic Gibbs measures. Recall that function $\ub_x$ is 2-periodic if $\ub_x=\ub_y$ whenever $d(x,y)$ is divisible by 2 (see for detail section (2)).

Let $\ub_x$ be a 2-periodic function. Then, to exist the
corresponding Gibbs measure, the function $\ub_x=(u_{x,1},u_{x,2})$
should satisfy the following equation:
\begin{eqnarray}\label{ux}
u_{x,1}=\left(\dfrac{{u_{y,1}}\x+u_{y,2}\y+\z}{{u_{y,1}}\z+u_{y,2}\y+\x}\right)^2,
\ u_{x,2}=\left(\dfrac{{u_{y,1}}\y+u_{y,2}\x+\y}{{u_{y,1}}\z+u_{y,2}\y+\x}\right)^2,
\end{eqnarray}
where $d(x,y)=2$ for all $x,y \in V$.

According to the previous section,$u_{x,1}=1$ is invariant line for
the equation \eqref{ux}. Therefore, in what follows, we assume
$u_{x,1}=1$ for all $x \in V$. Then, \eqref{ux} reduces to

\begin{eqnarray}\label{period}
u=f(f(u))
\end{eqnarray}
where
\begin{eqnarray*}
    f(u)=\left(\dfrac{\x{u}+2\y}{\y{u}+\x+\z}\right)^2.
\end{eqnarray*}
Roots of $u_2=f(u)$ are clearly roots of Eq.\eqref{period}. In order
to find the other roots of Eq. \eqref{period} that differ from the
roots of $u=f(u)$, we must therefore consider the equation
\begin{eqnarray*}
    \dfrac{f(f(u))-u}{f(u)-u}=0,
\end{eqnarray*}
which yields the quadratic equation
\begin{eqnarray}\label{bal}
\begin{aligned}
(\y^2\z^2+2\y\x^2\z+2\z\y^2\x+\x^4+2\y\x^3+\x^2\y^2)u^2+(\x^2\z^2+6\y\x^3+8\x^2\y^2+8\z\y^2\x\\
+2\y\z^3+2\x^3\y+6\y\x^2\z-4\y^4+6\y\z^2\x+\x^4)u+\x^4+4\x^2\y^2+4\y\x^3\\
+\z^4+4\x^3\z+4\y\z^2\x+6\x^2\z^2+8\y\x^2\z+4\x\z^3=0
\end{aligned}
\end{eqnarray}
Note that the positive roots of \eqref{bal} generate periodic Gibbs measures. In general, the existence of two positive roots are given by the following conditions:
\begin{eqnarray}\label{BD}
B <\ 0,\ D >\ 0,
\end{eqnarray}
where
\begin{align*}
B=&\x^2\z^2+6\y\x^3+8\x^2\y^2+8\z\y^2\x+2\y\z^3+2\x^3\z+6\y\x^2\z-4\y^4+6\y\z^2\x+\x^4\\
D=&16\y^8-3\x^8+96\y^3\x^4\z-64\z\y^6\x+40\x^2\z^2\y^4-40\x^4\z^3\y-48\y^5\x^2\z-48\y^5\z^2\x-20\x^3\z^4\y\\
&-48\x^5\z^2\y+64\x^2\y^3\z^3-4\x^2\z^5\y+16\z^4\y^3\x-36\y\x^6\z+96\x^3\y^3\z^2+80\x^3\y^4\z-3\x^4\z^4-12\x^5\z^3\\
&-18\x^6\z^2+48\y^3\x^5-48\y^5\x^3-12\y\x^7+40\x^4\y^4-64\x^2\y^6-16\y^5\z^3-12\x^7\z.
\end{align*}
Let us consider several cases.
\begin{itemize}
    \item[(i)] Let $\z=\x+1$ and $\y=1$, then we have
    \begin{eqnarray*}
        B&=&4\x^4+24\x^3+41\x^2+20\x-2,\\
        D&=&-144\x+344\x^3-156\x^2+541\x^4-48\x^8-256\x^7-424\x^6-32\x^5.
    \end{eqnarray*}
    We can factor $D$ as follows,
    \begin{eqnarray*}
        D=-\x(4+3\x)(2\x+3)^2(2\x^2+\x-2)^2.
    \end{eqnarray*}
    Then $D < 0$, i.e., all the 2-periodic Gibbs measures are translation invariant.

    \item[(ii)] Let $\z=\x$ and $\y=1$, then one has
    \begin{eqnarray*}
        B&=&4\x^4+20\x^3+16\x^2-4,\\
        D&=&-48\x^8-160\x^7+160\x^4+320\x^5-160\x^3-128\x^2+16.
    \end{eqnarray*}
    We can factor $D$ as follow
    \begin{eqnarray*}
        D=-16(3\x^4+10\x^3+6\x^2-1)(\x-1)^2(\x+1)^2.
    \end{eqnarray*}
    Using a MAPLE program, we find that the equation $D=0$ has two real roots, such that one of them is positive,
    \begin{eqnarray*}
        \theta_D \approx 0.323591553488076
    \end{eqnarray*}
    Hence, if $\x > \theta_D $, then $D < 0$, i.e., all the 2-periodic Gibbs measures are translation invariant. If $0 < \x < \theta_D$, then $D > 0$ and $B < 0$ (see Figure 17) which implies the existence of 2-periodic Gibbs measure which implies the following result:

    \begin{thm}
        Let $\z=\x$ and $\y=1$. If $0 < \x < \theta_D$, then there exist a phase transitions.
    \end{thm}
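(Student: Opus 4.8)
The plan is to show that under the specialization $\z=\x$, $\y=1$ both sign requirements in \eqref{BD} hold simultaneously on the interval $(0,\theta_D)$, so that the quadratic \eqref{bal} has two distinct positive roots. These roots form a genuine $2$-cycle of $f$, and the associated $2$-periodic boundary law produces a Gibbs measure that is not translation-invariant; its coexistence with the translation-invariant measures is precisely the phase transition.

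First I would substitute $\z=\x$ and $\y=1$ into $B$ and $D$ to recover the one-variable expressions recorded above, namely $B=4\x^4+20\x^3+16\x^2-4$ and the factorization $D=-16(3\x^4+10\x^3+6\x^2-1)(\x-1)^2(\x+1)^2$. Writing $g(\x)=3\x^4+10\x^3+6\x^2-1$, I note $g(0)=-1<0$ and $g(1)=18>0$, so $g$ has a root $\theta_D\in(0,1)$; since $(\x-1)^2(\x+1)^2>0$ on $(0,1)$, the sign of $D$ is opposite to that of $g$, whence $D>0$ exactly on $(0,\theta_D)$. For the companion inequality I would locate the positive zero of $B$: from $B(0)=-4<0$ and a sign change near $\x\approx0.405$, the only positive root of $B$ exceeds $\theta_D$, so $B<0$ throughout $(0,\theta_D)$. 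Both inequalities in \eqref{BD} therefore hold on this interval.

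Next I translate the sign data into a statement about the roots of \eqref{bal}. Its leading coefficient and its constant term are each a sum of products of the positive quantities $\x,\y,\z$, hence positive, so by Vieta's formulas the product of the two roots is positive automatically; the condition $B<0$ makes their sum positive as well, forcing both roots to be positive, while $D>0$ makes them real and distinct. Because \eqref{bal} was obtained by dividing $f(f(u))-u$ by the factor $f(u)-u$, its roots are exactly the period-$2$ points of $f$ that are not fixed points. Thus the two distinct positive roots $u_-,u_+$ satisfy $f(u_-)=u_+$, $f(u_+)=u_-$ with $u_-\neq u_+$, i.e.\ they constitute a genuine period-$2$ orbit of $f$. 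Setting $u_{x,1}=1$ for all $x$ and alternating $u_{x,2}$ between $u_-$ and $u_+$ on successive levels gives a $2$-periodic solution of \eqref{ux} that is not translation-invariant, hence a $2$-periodic Gibbs measure distinct from the translation-invariant one, which is the asserted phase transition.

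The main obstacle I anticipate is twofold. On the analytic side, the conclusion rests on the positive zero of $B$ lying strictly to the right of $\theta_D$; this ordering is not visible from the factorizations alone and must be pinned down numerically, so a clean argument requires an explicit localization of both $\theta_D$ and the root of $B$ rather than an appeal to a plotted figure. On the conceptual side, one must ensure the two roots genuinely form a $2$-cycle and do not collapse onto a fixed point of $f$: strict positivity of $D$ on the open interval secures distinctness, and the division by $f(u)-u$ formally excises the fixed points, but confirming that this division introduces no spurious fixed point as a root of the quadratic is the delicate point guaranteeing that the resulting measure is truly $2$-periodic rather than merely another translation-invariant one.
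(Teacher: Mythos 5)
Your proposal is correct and follows essentially the same route as the paper: specialize $B$ and $D$ to $\z=\x$, $\y=1$, factor $D$ to see that $D>0$ exactly on $(0,\theta_D)$, check $B<0$ there, and invoke the two-positive-roots criterion \eqref{BD} for the quadratic \eqref{bal} to obtain a genuine $2$-periodic Gibbs measure. The only differences are cosmetic improvements in rigor on your part — you justify \eqref{BD} via Vieta's formulas and replace the paper's appeal to Figure 17 with an explicit localization of the positive root of $B$ to the right of $\theta_D$.
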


    \begin{rem}
        We note that under condition theorem 6.1, one can find translation-invariant Gibbs measure if $\x > 3$. Theorem 6.1 means that the existence of 2-periodic Gibbs measure does not implies the existence of translation-invariant Gibbs measure.
    \end{rem}
    \begin{figure}[h!]
        \begin{center}
            \includegraphics[width=10cm]{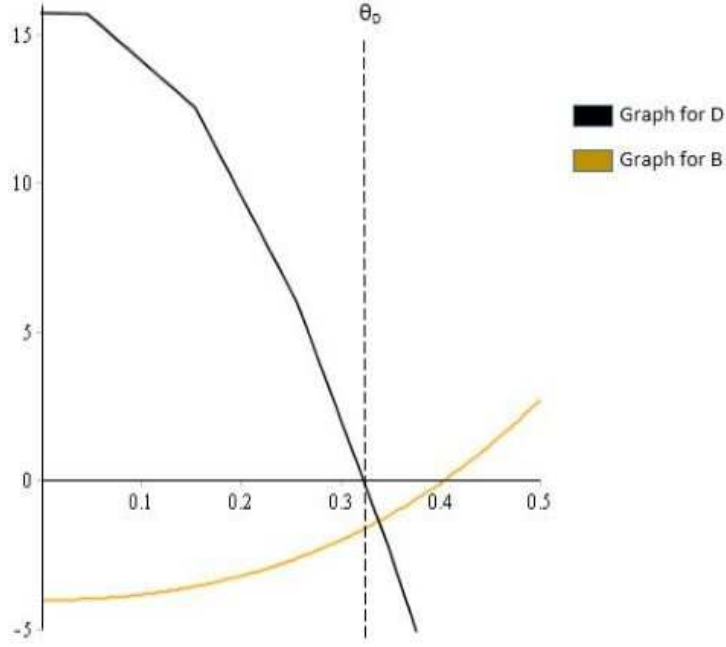}
        \end{center}
        \caption{Existence of 2-periodic solution (case (ii))} \label{case 2}
    \end{figure}

    \item[(iii)] Let $\x=\y$ and $\z$ is arbitrary number, then one has
    \begin{eqnarray*}
        B&=&7\x^2\z^2+11\x^4+16\z\x^3+2\x\z^3,\\
        D&=&-7\x^4\z^4+22\x^6\z^2-4\x^5\z^3-4\x^3\z^5-23\x^8+16\z^7\z.
    \end{eqnarray*}
    We can factor $D$ as follows
    \begin{eqnarray*}
        D=-\x^3(23\x^3+30\x^2\z+15\x\z^2+4\z^3)(-\z+\x)^2.
    \end{eqnarray*}
    Hence, for any value of $\x,\z > 0$ then $D < 0$, i.e., all the 2-periodic Gibbs measures has translation-invariant.

\end{itemize}

\section{References}

\medskip

\smallskip

\end{document}